\newcommand{\PWq}{{\mathbb Q}}
\newcommand{\R}{\mathbb{R}} 
\newcommand{\QW}{\mathbb{Q}}
\newcommand{\E}{\mathbb{E}}
\newcommand{\PW}{\mathbb{P}}
\newcommand{\Borel}{\mathbb{B}}
\newcommand{\Om}{\Omega}
\newcommand{\om}{\omega}
\newcommand{\Ep}[1]{\E_\PW\left[#1\right]}
\newcommand{\Eq}[1]{\E_\QW\left[#1\right]}
\newcommand{\Ezero}[1]{\E_0\left[#1\right]}
\newcommand{\Acal}{{\mathcal A}}
\newcommand{\Fcal}{{\mathcal F}}
\newcommand{\mbu}{{\mathbb u}}
\newtheorem{theorem}{Theorem}[section]
\newtheorem{proposition}[theorem]{Proposition}
\newtheorem{lemma}[theorem]{Lemma}
\newtheorem*{assumption}{Assumption}
\newtheorem{remark}[theorem]{Remark}
\newtheorem{definition}[theorem]{Definition}
\theoremstyle{definition}
\renewcommand{\om}{\omega}
\renewcommand{\Om}{\Omega}
\renewcommand{\PW}{{\mathbb P}}
\renewcommand{\PWq}{{\mathbb Q}}
\renewcommand{\QW}{{\mathbb Q}}
\renewcommand{\R}{\mathbb{R}}
\renewcommand{\E}{\mathbb{E}}
\renewcommand{\Borel}{\mathbb{B}}
\renewcommand{\Ep}[1]{\E_\PW\left[#1\right]}
\renewcommand{\Eq}[1]{\E_\QW\left[#1\right]}
\newcommand{\Eql}[1]{\E_{\QW_L}\left[#1\right]}
\renewcommand{\Ezero}[1]{\E_0\left[#1\right]}
\title{On consistency of optimal portfolio choice for state-dependent exponential utilities}
\author{Edoardo Berton \and Marzia De Donno \and Marco Maggis}
\date{}
\begin{document}
	
	
	\title{On consistency of optimal portfolio choice for state-dependent exponential utilities}
	
	\author{Edoardo Berton\thanks{Universit\`a Cattolica del Sacro Cuore, Milano, edoardo.berton@unicatt.it } \and Marzia De Donno\thanks{Universit\`a Cattolica del Sacro Cuore, Milano, marzia.dedonno@unicatt.it} \and Marco Maggis\thanks{Universit\`a degli Studi di Milano, marco.maggis@unimi.it \newline The authors thank Marta Minotti, Stefano Pagliarani and Alessandro Sbuelz for helpful discussions on this subject.}}
	
	\date{}
	
	\maketitle
	
	\begin{abstract}
		In an arbitrage-free simple market, we demonstrate that for a class of state-dependent exponential utilities, there exists a unique prediction of the random risk aversion that ensures the consistency of optimal strategies across any time horizon. Our solution aligns with the theory of forward performances, with the added distinction of identifying, among the infinite possible solutions, the one for which the profile remains optimal at all times for the market-adjusted system of preferences adopted.
	\end{abstract}
	
	\noindent \textbf{Keywords}:time consistency, state-dependent utility, portfolio choice, forward performance, exponential utility, market price of risk	
	
	\section{Introduction}
	
	The problem of optimal portfolio choice stands out as one of the most significant streams of literature related to decision making under uncertainty (see \cite{Gilboa09} for an introduction to the theory of decisions). In this framework, an agent aims at maximizing her utility by investing in a financial market. Numerous approaches to optimal investment have been proposed since the seminal contributions of Samuleson \cite{Samuelson69} and Merton \cite{Merton69}, which extended this theory beyond Markowitz \cite{Markowitz53} one-period model. 
	
	In a multi-period setting, an optimal strategy is an investment plan that ensures the agent to maximize a function of her terminal wealth resulting from trading in the market. Ideally, such a strategy is determined at the beginning of the trading period and is intended to be followed ``\textit{consistently}'' until the specified terminal date. However, the optimization problem may be \textit{time-inconsistent}, meaning that the strategy currently considered optimal may not remain optimal in subsequent periods. Consequently, it is not universally clear what optimality should signify in such contexts.
	
	The literature has addressed time inconsistency through two primary approaches. The first, known as the \textit{precommitment} approach, resolves inter-temporal inconsistency by committing to a strategy evaluated as optimal at inception, regardless of future deviations. This approach has been extensively studied in works such as \cite{LN00,ZL00,JZ08}. The second approach adopts a game-theoretic perspective, treating future incentives to deviate as constraints on optimization. This perspective, pioneered by \cite{EL06} and further developed by \cite{BC10}, \cite{BMZ14}, and \cite{BKM17}, often addresses time inconsistency arising from mean-variance utility formulations.
	
	Classical studies, such as the Merton problem, focus on maximizing the expected utility $\mathbb{E}[\mbu(V_t)]$ at a terminal time $t$ by deriving optimal strategies through backward induction. In contrast, more recent research \cite{MZ07, MZ08, MZ09, HH07} introduces an alternative, maturity-independent framework. In this setting, the utility function evolves stochastically and forwardly over time. Known as forward performance theory, this framework substitutes the classical indirect utility function with a forward utility criterion derived from the financial market. The stochastic utility is deduced in order to satisfy specific (super)martingale conditions to recover the Dynamic Programming Principle.
	\\  In this paper, we approach portfolio choice in a simple market model from a ``forward'' perspective, focusing primarily on the issue of consistency. Unlike forward performance theory, our approach centers on deriving optimal strategies that can be consistently extended beyond a fixed terminal date in a self-financing manner. Building on the seminal definition in \cite[p. 475]{KP77}, we adapt and refine their notion of consistency to our setup (see Definition \ref{def:time_consistency}). A policy $\alpha^\ast$ is consistent if, ``for any time period $t$, it optimizes the objective functional, taking as given previous decisions and that future policy decisions are similarly selected''. 
	\\In Proposition \ref{prop:det_exp_ut}, we demonstrate that inconsistency may arise in the classical Merton problem when the market price of risk exhibits sufficient stochasticity. However, this drawback is mitigated when uncertainty about the agent's future risk aversion is introduced. Our main result, stated in Theorem \ref{main:thm}, establishes that consistency for exponential-type utilities is achieved if and only if the agent's risk aversion evolves according to a unique dynamic, matching in this way the form obtained in \cite[Theorem 4.4]{Zitkovic09}. Similarly to our paper, \cite{Zitkovic09} focuses on random fields with an exponential structure and provides necessary and sufficient conditions for the self-generation of preferences over time. However, Proposition \ref{prop:infinite_new} demonstrates that, even within a simplified market model, an infinite class of forward performances can be readily constructed. Nonetheless, only the $\mbu_t$ derived in Theorem \ref{main:thm} guarantees a strategy $\alpha^\ast$ that remains optimal at all times $t$ for the objective functional $\mathbb{E}[\mbu_t(\cdot)]$. Thus, consistency emerges as a crucial criterion for inferring the agent's preferences ``adjusted'' to the temporal evolution of market information.
	
	\section{Preliminaries and problem formulation}
	The probability space $(\Omega ,\Fcal,\mathbb{P})$ is fixed throughout
	the paper and we shall denote by $L^{0}(\Omega ,\mathcal{F},\mathbb{P})$
	the space of $\mathcal{F}$ measurable random variables that are $\mathbb{P}$
	a.s. finite. Similarly $L^{1}(\Omega ,\mathcal{F},\mathbb{P})$ will denote the space of integrable random variables. In this paper a filtration $(\Fcal_t)_{t\geq 0}$ will describe the information available to an economic agent at any time $t\in [0,\infty)$ and we shall always assume that the filtered probability space $(\Omega,\Fcal,(\Fcal_t)_{t\geq 0},\PW)$, satisfies the usual conditions.   
	\\ A decision maker aims at maximizing her utility by investing her initial endowment $x\in \R$ in a market 
	composed by a risky asset $(S_t)_{t\geq 0}$ and a bond $(B_t)_{t\geq 0}$ whose dynamics are as follows 
	\begin{equation*}
		dS_t = S_t\left(\mu_t dt+ \sigma_t dW_t\right), \quad dB_t=r B_t dt
	\end{equation*}
	with $S_0>0$, $B_0=1$, $W$ being a $1$-dimensional Brownian motion and $\mu_t,\sigma_t$ progressively measurable processes. The market is assumed to be complete, which is the case if the
	filtration is the completion of the natural filtration generated by $W$.  Moreover, we impose that the process 
	$$Z_t = \exp\left\{-\frac12\int_0^t \theta^2_s ds + \int_0^t \theta_s dW_s\right\} \text{ where } \theta_t = -\left(\frac{\mu_t - r}{\sigma_t}\right)$$
	is a martingale. This property holds whenever an opportune Novikov condition is adopted. However, for the sake of the exposition of this decision problem, it is simpler to assume that 
	\begin{equation}\label{hp_theta} \PW\left(\theta_t\neq 0\text{ and } \sup_{0\leq u\leq t}|\theta_u|<K_t\right)=1 \text{ for any } t>0 \text{ and some } K_t\in [0,\infty),\end{equation} 
	so that at any $t>0$ the probability measure $\PWq_t$, with $Z_t = \frac{d\PWq_t}{d\PW}$, defines a martingale measure for the discounted price process $\left(e^{-ru}S_u\right)_{u\in[0,t]}$. We observe that, for any choice of $t,s$ such that $t>s$, the restriction of $\PWq_t$ to $\Fcal_s$ coincides with $\PWq_s$. Hence, to avoid burdensome notation, we omit the subscript and denote by $\PWq$ the family of measures induced by the process $(Z_t)_{t\geq 0}$. 
	
	Throughout this paper, all the financial quantities of interest are to be considered in discounted terms. Hence, $V^\alpha$ denotes the discounted value of a self-financing portfolio and $\alpha_t$ 
	represents the proportion of $V^\alpha_t$ invested in the risky asset at time $t$. The dynamics of $V^\alpha_t$ are given by 
	\begin{equation}\label{discounted:port}
		dV^\alpha_t = V^\alpha_t \left[(\mu_t - r) \alpha_t dt + \sigma_t \alpha_t dW_t\right].
	\end{equation}
	
	\medskip
	
	For a fixed time horizon $t\in (0,+\infty)$, the agent is endowed with a preference relation $\succeq^t$ that admits a numerical representation through a state-dependent utility $\mbu_t:\Omega\times \R\to \R\cup \{-\infty\}$ and the subjective probability of the agent coincides with the reference\footnote{This can be assumed without loss of generality if the subjective probability is equivalent to the reference measure, as the function $\mbu_t$ is unique up to a change of measure,  (see \cite{WZ99})} measure $\PW$. With an abuse of notation we shall always denote by $\mbu_t(X)$ the random variable $\omega\mapsto \mbu_t(\omega,X(\omega))$ for any $X\in L^{0}(\Omega,\Fcal_t,\PW)$. Given $X,Y\in L^{0}(\Omega,\Fcal_t,\PW)$ such that $\mbu_t(X),\mbu_t(Y)\in L^{1}(\Omega,\Fcal_t,\PW)$, the preference ordering is given by 
	\begin{equation}\label{repr:T}
		X \succeq^t Y    \quad \text{ if and only if } \quad  \Ep{\mbu_t(X)}\geq \Ep{\mbu_t(Y)}.
	\end{equation}
	\cite{WZ99} proved that this representation holds if the preference $\succeq^t$ is monotone, pointwise continuous, and satisfies the Sure Thing principle. Furthermore, \cite{BDM24} showed in addition that $\mbu_t$ can be chosen to be $\Fcal_t\otimes \mathbb{B}_{\R}$ measurable\footnote{$\mathbb{B}_{\R}$ is the Borel sigma algebra on the real line.} with $\mbu_t(\omega,\cdot)$ strictly increasing and continuous for every $\omega\in\Omega$. 
	\\ In the decision framework just depicted, the classical utility maximization problem thus takes the form 
	\begin{equation}\label{optimization}\text{maximize }  \Ep{\mbu_t(V^\alpha_t)} \text{ constrained to } V^\alpha_0=x,
	\end{equation}
	where $\alpha$ belongs to a suitable family $\mathcal{A}$ of admissible controls and $x \in \R$ represents the initial wealth of the agent. In particular, we will always suppose that any admissible control $\alpha\in\mathcal{A}$ guarantees the existence and uniqueness of a (weak) solution of \eqref{discounted:port}. We shall denote by $(\alpha^\ast_u)_{u\in [0,t]}$ the strategy that maximizes the expected utility of the agent in the time interval $[0,t]$ and, therefore, $V_t^{\alpha^\ast}$ denotes the maximizer of the objective functional $ \Ep{\mbu_t(\cdot)}$. 
	
	\medskip
	
	As illustrated in the introduction, we shall adopt throughout the paper the following notion of consistency, which rephrases the seminal notion in \cite{KP77}. 
	\begin{definition}\label{def:time_consistency} An optimization problem is \textbf{consistent} with respect to any time horizon if, for any $s<t$ and given $V_s^{\alpha^\ast}$ the maximizer of $\Ep{\mbu_s(\cdot)}$ and  $V_t^{\alpha^\ast}$ the maximizer of $\Ep{\mbu_t(\cdot)}$ both selected by solving \eqref{optimization}\footnote{We make an abuse of notation, as a priori the strategies $(\alpha^\ast_u)_{u\in [0,s]}$, $(\alpha^\ast_u)_{u\in [0,t]}$ which define $V_s^{\alpha^\ast}$, $V_t^{\alpha^\ast}$ might be different. Nevertheless, in the chosen market model, and assuming consistency holds, they must coincide over the common time interval.}, then  $\Eq{V_t^{\alpha^\ast}\middle| \Fcal_s}=V_s^{\alpha^\ast}$. 
	\end{definition}
	
	This type of consistency naturally leads to strategies which are not affected by a fixed time horizon, which is the basis of the well known theory of forward performances, which we now recall, as it will play a crucial role in the discussion. 
	
	\begin{definition}\label{def:fwd_performance}
		Given $u_0:\R \to \R$ a concave and increasing function, an $\Fcal_t$-adapted process $(\mbu_t)_{t\geq 0}$ with $\mbu_t:\Omega\times \R\to \R$ is a forward performance if
		\begin{itemize}
			\item the mapping $x \mapsto \mbu_t(\om, x)$ is increasing and concave for each $\om \in \Om$ and $t \geq0$;
			\item it satisfies $\mbu_0(\om, x) = u_0(x)$;
			\item for all $t,s \in [0,\infty)$ such that $t > s$ and for any $\alpha\in\mathcal{A}$ it holds $\Ep{\mbu_t(V_t^\alpha))\middle| \Fcal_s}\leq \mbu_s(V_s^\alpha)$;
			\item for all $t,s \in [0,\infty)$ such that $t > s$ there exists $\alpha^{\ast}\in\mathcal{A}$ such that   
			$\Ep{\mbu_t(V_t^{\alpha^{\ast}})\middle| \Fcal_s}= \mbu_s(V_s^{\alpha^{\ast}})$.
		\end{itemize}
	\end{definition}
	
	\begin{remark}
		Observe that Definition \ref{def:time_consistency} is concerned with a notion of consistency which can be effectively interpreted as a self-financing constraint for the optimal portfolio, in order to guarantee optimality at any date $t$ for a time dependent preference structure. Conversely, the concept of consistency that motivates Definition \ref{def:fwd_performance} pertains a natural martingale condition that is analogous to that of the Dynamic Programming Principle. In this case, given a solution $(\alpha_t^\ast,\mbu_t)_{t\geq 0}$, the value $V_t^{\alpha^\ast}$ might not be optimal for the objective functional $\Ep{\mbu_t(\cdot)}$. Indeed Proposition \ref{prop:infinite_new} will point out that the two notions differ significantly: in fact, among the infinite choices of forward performances, only the one obtained in Theorem \ref{main:thm} has the desired optimality property.   
	\end{remark}

\subsection{On consistency for Von Neumann-Morgenstern type preferences}
Consistency may fail already in classical optimization problems, when utilities are not state dependent, as described in the following proposition. In complete markets, the classical Merton problem can be solved by computing explicitly at time $t$ the optimal discounted wealth $\xi^\ast_t$ and then deriving backwardly optimal strategies via a perfect hedging procedure, which intrinsically links them to the terminal date. It seems therefore appropriate to discuss this simple motivating example before proceeding to our main result. In fact the proposition implies that if the market price of risk fails to be deterministic, an agent with classical preferences may not be able to devise consistent optimal plans at all.
\begin{proposition}\label{prop:det_exp_ut} Assume that \eqref{hp_theta} holds true for $\theta_t = -\left(\frac{\mu_t - r}{\sigma_t}\right)$. For
	\begin{equation}\label{eq:exp_ut_deterministic}
		u(x) = -\frac{1}{\gamma}e^{-\gamma x}, \quad \gamma > 0
	\end{equation} problem \eqref{optimization} is consistent if and only if the function $t\mapsto \theta^2_t$ is deterministic.
\end{proposition}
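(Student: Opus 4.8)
The plan is to solve \eqref{optimization} explicitly by the complete-market martingale (duality) method, express the optimal terminal wealth through $\ln Z_t$, and reduce Definition \ref{def:time_consistency} to a statement about conditional $\PWq$-expectations of $\int\theta_u^2\,du$. Since the market is complete, a discounted self-financing wealth with $V_0^\alpha=x$ satisfies the budget constraint $\Eq{V_t^\alpha}=x$, so the problem becomes: maximise $\Ep{u(\xi)}$ over $\Fcal_t$-measurable $\xi$ subject to $\Ep{Z_t\xi}=x$. A pointwise Lagrangian argument with $u'(\xi)=\lambda Z_t$ yields
\begin{equation*}
	\xi_t^\ast = V_t^{\alpha^\ast} = c_t-\frac{1}{\gamma}\ln Z_t,\qquad c_t := x+\frac{1}{\gamma}\Eq{\ln Z_t},
\end{equation*}
with $c_t$ the deterministic constant fixed by the budget constraint. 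In a complete market the optimal wealth process for horizon $t$ is the $\PWq$-martingale $u\mapsto\Eq{\xi_t^\ast\middle|\Fcal_u}$, so the value at time $s<t$ of the horizon-$t$ optimiser is $\Eq{\xi_t^\ast\middle|\Fcal_s}$, whereas the horizon-$s$ optimiser terminates at $\xi_s^\ast$. Hence Definition \ref{def:time_consistency} is equivalent to $\Eq{\xi_t^\ast\middle|\Fcal_s}=\xi_s^\ast$ for all $s<t$.

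Next I would compute $\Eq{\ln Z_t\middle|\Fcal_s}$. Writing $W^\PWq_u=W_u-\int_0^u\theta_r\,dr$ for the $\PWq$-Brownian motion from Girsanov's theorem, substitution gives $\ln Z_t=\frac12\int_0^t\theta_u^2\,du+\int_0^t\theta_u\,dW^\PWq_u$, where the stochastic integral is a true $\PWq$-martingale (assumption \eqref{hp_theta} bounds $\theta$ on $[0,t]$ and supplies all the integrability needed). Taking conditional expectations,
\begin{equation*}
	\Eq{\ln Z_t\middle|\Fcal_s}=\ln Z_s+\frac12\,\Eq{\int_s^t\theta_u^2\,du\,\middle|\,\Fcal_s}.
\end{equation*}
Inserting this into $\Eq{\xi_t^\ast\middle|\Fcal_s}=\xi_s^\ast$ and using $2\gamma(c_t-c_s)=\Eq{\int_s^t\theta_u^2\,du}$, the $\ln Z_s$ terms cancel and consistency reduces exactly to
\begin{equation*}
	\Eq{\int_s^t\theta_u^2\,du\,\middle|\,\Fcal_s}=\Eq{\int_s^t\theta_u^2\,du}\qquad\text{for all }s<t.
\end{equation*}

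It then remains to show that this identity holds for all $s<t$ if and only if $t\mapsto\theta_t^2$ is deterministic. The ``if'' direction is immediate, since a deterministic integrand makes the conditional and unconditional expectations coincide. For ``only if'' I would set $N_t:=\int_0^t\theta_u^2\,du-\Eq{\int_0^t\theta_u^2\,du}$; the displayed identity says precisely that $\Eq{N_t\middle|\Fcal_s}=N_s$, i.e.\ $N$ is a $\PWq$-martingale. Now $N$ is continuous and of finite variation, being the difference of the continuous increasing process $\int_0^\cdot\theta_u^2\,du$ and the deterministic increasing function $t\mapsto\Eq{\int_0^t\theta_u^2\,du}$. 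A continuous local martingale of finite variation is almost surely constant, hence $N_t\equiv N_0=0$; thus $\int_0^t\theta_u^2\,du=\Eq{\int_0^t\theta_u^2\,du}$ is deterministic for every $t$, and differentiating in $t$ shows $\theta_t^2$ is deterministic.

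The only genuinely delicate step is this last one: recognising that the consistency identity forces the manifestly finite-variation process $\int_0^\cdot\theta_u^2\,du$ to be a $\PWq$-martingale, and then invoking the rigidity of continuous martingales of finite variation. The preceding steps---the complete-market duality computation and the application of Girsanov's theorem---are routine once the integrability afforded by \eqref{hp_theta} is noted.
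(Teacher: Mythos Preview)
Your proof is correct and follows the paper's route almost verbatim through the duality computation, the Girsanov rewriting of $\ln Z_t$, and the reduction of consistency to
\[
\Eq{\int_s^t\theta_u^2\,du\,\middle|\,\Fcal_s}=\Eq{\int_s^t\theta_u^2\,du}\quad\text{for all }s<t.
\]
The only difference is in the ``only if'' step: the paper applies Fubini to obtain $\Eq{\theta_u^2\mid\Fcal_s}=\Eq{\theta_u^2}$ for a.e.\ $u$, and then for fixed $u$ lets $s\uparrow u$ along the continuous martingale $I(s)=\Eq{\theta_u^2\mid\Fcal_s}$ to conclude $\theta_u^2=\Eq{\theta_u^2}$. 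Your argument instead packages the identity as the $\PWq$-martingale property of $N_t=\int_0^t\theta_u^2\,du-\Eq{\int_0^t\theta_u^2\,du}$ and kills it in one stroke via the rigidity of continuous finite-variation martingales. This is slightly more direct and is in fact the device the paper itself uses in the proofs of Lemma~\ref{lemma:alfa0} and Remark~\ref{power:utility}; either way the conclusion is the same.
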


\begin{remark}\label{power:utility}
	The issue of inconsistency arises not only for utilities of the exponential type. Simple inspections show that the Merton problem is always consistent for $u(x)=\ln(x)$, but for power utilities $u(x) = \frac{x^{\gamma}}{\gamma}$ with $\gamma\in (0,1)$ 
	problem \eqref{optimization} is consistent if and only if $t\mapsto \theta^2_t$ is a deterministic function (see the Appendix \ref{proof:power} for the proof), in analogy with the case treated in Proposition \ref{prop:det_exp_ut}. 
\end{remark}

\begin{remark}[What if consistency fails?]  
	If we consider two times $s<t$, the respective optima $V^{\alpha^\ast}_{s}$ and $V^{\alpha^\ast}_{t}$ and the risk neutral price $\Pi_{s}:=\Eq{V^{\alpha^\ast}_{t} \middle| \Fcal_{s}}$ (i.e. the hedging price at time $s$ of the optimum $V^{\alpha^\ast}_{t}$), situations where $\PW\left(\Pi_{s}\neq V^{\alpha^\ast}_{s}\right)>0$ may arise if consistency does not hold true. This issue is of little relevance whenever the agent cannot change her \textbf{initially} specified investment horizon $t$. We stress that often in practice an agent chooses to optimize up to time $t$ with the option of reconsidering her initially set horizon, either with $t_1<t$ or $t_2>t$. Usually short time horizons are considered and after that (and depending on the performance of the portfolio), the agent decides either to quit investing or to extend the final date.  
	In this situation once the optimum is found, then the agent performs the strategy $\alpha^\ast$ up to time $s$ reaching the output $V_{s}^{\alpha^\ast}$. However, as the event $A=\{V_{s}^{\alpha^\ast}< \Pi_{s}\}\in\Fcal_s$ may have positive probability, if $A$ occurs the agent will not be able to roll over her strategy from $s$ to $t$ in order to achieve the time $t$ optimum $V_{t}^{\alpha^\ast}$. 
\end{remark}

\section{Main results}\label{main results}
Motivated by the previous discussion, we are now ready to illustrate the main contribution of this paper. 
We consider an agent who plans to invest on the market which has a \textbf{stochastic market price of risk} $(\theta_t)_{t\geq 0}$ (see Assumption (A*) below). The agent faces uncertainty on her future risk aversion and may be willing to adjust the terminal date of the investment over time. We assume that the preferences of the agent are represented by a state-dependent utility function of the exponential type $\mbu_t:\Omega \times \R \to \R$ for $t\in [0,\infty)$, such that $\mbu_t(\om, x) = -\frac{1}{\gamma_t(\om)}e^{-\gamma_t(\om) x}$. The uncertain risk aversion is modeled through a general adapted process $\big(\frac{1}{\gamma_t}\big)_{t\geq 0}$, satisfying the following SDE:
\begin{align*}
	\begin{cases}
		d\left(\frac{1}{\gamma_t}\right)=\frac{1}{\gamma_t}(\eta_t dt+\beta_t d W^{\PWq}_t), \\ \gamma_0>0,
	\end{cases}
\end{align*}
where $dW^{\PWq}_t=dW_t-\theta_tdt$ is a Brownian motion under $\PWq$.
We will be working with the following set of assumptions in addition to \eqref{hp_theta}: 
\begin{assumption}[A]
	The processes $(\eta_t)_{t\geq 0}, (\beta_t)_{t\geq 0}$ are progressively measurable and for any $t>0$ there exists a constant $K_t > 0$ such that $$\PW\left(\sup_{0\leq u\leq t}|\eta_u|<K_t,\sup_{0\leq u\leq t}|\beta_u|<K_t\right)=1.$$
\end{assumption}
\begin{assumption}[A*] 
	Assumption (A) is in force and for any $t>0$ we have $\text{Var}(\theta^2_t) \neq 0$.
\end{assumption}

\begin{remark} Assumption (A*) restricts the financial market model to the case of a stochastic market price of risk. However, Proposition \ref{prop:tutti_beta} below addresses the complementary scenario of a deterministic market price of risk with stochastic risk aversion, thereby providing a comprehensive perspective on the problem.
	\\ Assumption (A) is needed to formally justify a few technical steps in the proofs, although the reader will notice that the boundedness conditions imposed on $(\eta_t)_{t\geq 0}$ and $(\beta_t)_{t\geq 0}$ are not too restrictive in light of the results of the following theorem (i.e. $\eta_t=0$ and $\beta_t=-\frac{\theta_t}{2}$).  
\end{remark}

\begin{theorem}\label{main:thm} 
	Suppose Assumption (A*) holds and consider $\mbu_t(\omega,x)= -\frac{1}{\gamma_t(\omega)} e^{-\gamma_t(\omega) x }$, with $(\gamma_t)_{t\geq 0}$ which solves
	\[d\left(\frac{1}{\gamma_t}\right)=\frac{1}{\gamma_t}(\eta_t dt+\beta_t d W^{\PWq}_t), \qquad \gamma_0>0.\]
	Problem \eqref{optimization} is consistent if and only if $\eta_t = 0$ and $\beta_t = -\frac{\theta_t}{2}$. \\
	Moreover, for this choice of $\eta_t$ and $\beta_t$, the following hold:
	\begin{itemize}
		\item The optimal wealth process is given by 
		$$
		\xi^\ast_t = \frac{1}{\gamma_t}(\gamma_0 x - \ln(Z_t)),
		$$
		\item The optimal strategy $\alpha_t^\ast$ is given by
		$$
		\alpha_t^\ast = -\frac{1}{\gamma_t \sigma_t \xi^\ast_t} \left(\theta_t + \frac{\theta_t}{2} \gamma_t \xi^\ast_t \right),
		$$
		\item for all $t,s \in [0,\infty)$ such that $t > s$ 
		$$\Ep{\mbu_t(V_t^{\alpha^{\ast}})\middle| \Fcal_s}= \mbu_s(V_s^{\alpha^{\star}}).$$
	\end{itemize}
\end{theorem}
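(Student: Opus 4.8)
The plan is to turn the dynamic problem into a family of static ones, one per horizon, solve each by convex duality, and then recognise consistency as a martingale property of the resulting optimal wealths. Since the market is complete, for a fixed horizon $t$ the attainable discounted terminal wealths from capital $x$ are exactly those $\xi\in L^{0}(\Omega,\Fcal_t,\PW)$ with $\Eq{\xi}=\Ep{Z_t\xi}=x$. I would therefore maximise $\Ep{-\frac{1}{\gamma_t}e^{-\gamma_t\xi}}$ subject to this single budget constraint, form the pointwise Lagrangian $-\frac{1}{\gamma_t}e^{-\gamma_t\xi}-\lambda Z_t\xi$, and solve its first-order condition $e^{-\gamma_t\xi}=\lambda Z_t$. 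Strict concavity makes the critical point the unique maximiser, so the horizon-$t$ terminal optimum is $\xi^\ast_t=-\frac{1}{\gamma_t}\bigl(\ln\lambda_t+\ln Z_t\bigr)$, with the deterministic multiplier $\lambda_t>0$ fixed by $\Eq{\xi^\ast_t}=x$.

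Next I would recast Definition \ref{def:time_consistency} analytically. As discounted self-financing wealth is a $\PWq$-martingale, the horizon-$t$ optimum replicated back to time $s$ equals $\Eq{\xi^\ast_t\mid\Fcal_s}$, and consistency demands this coincide with the horizon-$s$ optimum $\xi^\ast_s$; equivalently, the diagonal process $(\xi^\ast_t)_{t\ge0}$ must be a $\PWq$-martingale. Writing $\ln Z_t=\tfrac12\int_0^t\theta_u^2\,du+\int_0^t\theta_u\,dW^{\PWq}_u$ and using the SDE for $1/\gamma_t$, Itô's formula applied to $\xi^\ast_t=-\frac{1}{\gamma_t}(\ln\lambda_t+\ln Z_t)$ produces a drift whose annihilation is the single identity
\begin{equation}\label{plan:drift}
\frac{d}{dt}(\ln\lambda_t)+\tfrac12\theta_t^2+\bigl(\ln\lambda_t+\ln Z_t\bigr)\eta_t+\beta_t\theta_t=0 ,
\end{equation}
required to hold $dt\otimes d\PWq$-a.e. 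This identity is the hinge of both implications.

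For sufficiency I would set $\eta_t=0$, $\beta_t=-\theta_t/2$ and verify every assertion directly. Solving the linear SDE gives $1/\gamma_t=\frac{1}{\gamma_0}\mathcal{E}\bigl(-\tfrac12\int\theta\,dW^{\PWq}\bigr)_t$, a $\PWq$-martingale, so $\Eq{1/\gamma_t}=1/\gamma_0$; a Girsanov change of measure collapses $\ln Z_t$ to a pure martingale under the tilted law and yields $\Eq{\ln Z_t/\gamma_t}=0$, whence the budget constraint forces $\ln\lambda_t\equiv-\gamma_0 x$ and $\xi^\ast_t=\frac{1}{\gamma_t}(\gamma_0 x-\ln Z_t)$. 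For these values \eqref{plan:drift} reduces to $-\tfrac12\theta_t^2+\tfrac12\theta_t^2=0$, so $\xi^\ast_t$ is a true $\PWq$-martingale under the integrability guaranteed by Assumption (A), and consistency follows. The strategy then comes from matching the diffusion of $d\xi^\ast_t=\phi_t\,dW^{\PWq}_t$ with the portfolio dynamics $dV^\alpha_t=V^\alpha_t\sigma_t\alpha_t\,dW^{\PWq}_t$: computing $\phi_t=-\frac{\theta_t}{\gamma_t}-\frac{\theta_t}{2}\xi^\ast_t$ and dividing by $\sigma_t\xi^\ast_t$ reproduces the stated $\alpha^\ast_t$. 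Finally, since $\gamma_t\xi^\ast_t=\gamma_0 x-\ln Z_t$ gives $\mbu_t(\xi^\ast_t)=-e^{-\gamma_0 x}Z_t/\gamma_t$, the last bullet reduces to showing $Z_t/\gamma_t$ is a $\PW$-martingale; rewriting it under $\PW$ as $\frac{1}{\gamma_0}\mathcal{E}\bigl(\tfrac12\int\theta\,dW\bigr)_t$ exhibits it as a Doléans exponential, a true $\PW$-martingale under (A).

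For necessity — which I expect to be the main obstacle — I would exploit \eqref{plan:drift} together with the fact that $\ln\lambda_t$ is deterministic. Every summand except $\eta_t\ln Z_t$ is governed by the instantaneous data $(\theta_t,\eta_t,\beta_t)$ and the deterministic multiplier, whereas $\ln Z_t$ carries the genuinely stochastic accumulation $\int_0^t\theta_u\,dW^{\PWq}_u$; Assumption (A*), ensuring $\mathrm{Var}(\theta_t^2)\neq0$, prevents this fluctuation from being reabsorbed and forces $\eta_t\equiv0$. With $\eta_t=0$, identity \eqref{plan:drift} becomes $\frac{d}{dt}(\ln\lambda_t)+\tfrac12\theta_t^2+\beta_t\theta_t=0$, so $\tfrac12\theta_t^2+\beta_t\theta_t$ must equal the deterministic quantity $-\frac{d}{dt}(\ln\lambda_t)$; differentiating the budget relation $\ln\lambda_t=-\gamma_0\bigl(x+\Eq{\ln Z_t/\gamma_t}\bigr)$ closes this into a fixed-point constraint that, under $\mathrm{Var}(\theta_t^2)\neq0$, is compatible only with $\frac{d}{dt}(\ln\lambda_t)\equiv0$, whence $\tfrac12\theta_t^2+\beta_t\theta_t=0$ and, since $\theta_t\neq0$ by \eqref{hp_theta}, $\beta_t=-\theta_t/2$. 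The delicate step throughout is the passage from the pathwise identity \eqref{plan:drift} to the pointwise conclusions, and it is exactly there that the non-degeneracy in Assumption (A*) does the decisive work of excluding every competing dynamics for the risk aversion.
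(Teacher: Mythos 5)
Your overall architecture is the same as the paper's: complete-market duality yielding $\xi^\ast_t=-\frac{1}{\gamma_t}\left(\ln\lambda_t+\ln Z_t\right)$ (this is the paper's Lemma \ref{utile}), consistency recast as the $\PWq$-martingale property of the diagonal process $(\xi^\ast_t)_{t\geq0}$, and a vanishing-drift identity as the hinge; your drift identity is exactly the differential form of the paper's integrated relation \eqref{relazione:importante} (the horizon parameter $k_t$ appearing there cancels when one differentiates at $t=s$). Your sufficiency half is correct and complete in outline: the Girsanov argument giving $\Eq{\frac{1}{\gamma_t}\ln Z_t}=0$, the constancy of the multiplier, the diffusion-matching that produces $\alpha^\ast_t$, and the identification of $Z_t/\gamma_t$ as the Dol\'eans exponential $\frac{1}{\gamma_0}\exp\left\{\frac12\int_0^t\theta_u dW_u-\frac18\int_0^t\theta_u^2du\right\}$ all agree with (and slightly streamline) the paper's computations.

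The necessity half, however, is not a proof, and the mechanism you invoke, Assumption (A*), is not what can make it work. First, the claim that (A*) forces $\eta\equiv0$: your identity is a pointwise a.s.\ relation that formally admits solutions with $\eta_t\neq0$, namely $\eta_t=-\left(\frac{d}{dt}\ln\lambda_t+\frac12\theta_t^2+\beta_t\theta_t\right)\big/\left(\ln\lambda_t+\ln Z_t\right)$; progressive measurability does not prevent $\eta_t$ from depending on $\ln Z_t$, so ``the fluctuation cannot be reabsorbed'' is an assertion of the conclusion, not an argument. The paper rules such solutions out by a structural argument your sketch never touches: $(\eta_t,\beta_t)$ are specified independently of the horizon $t$ and of the initial wealth $x$, whereas the fixed-horizon consistency relation \eqref{relazione:importante} involves both; this forces $k_t=-\ln\lambda_t$ to be constant in $t$, then (varying $x$) $c_t=\gamma_0$ and $\Eq{\frac{1}{\gamma_t}\ln Z_t}=0$, and only then $\eta\equiv0$. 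In your diagonal formulation the horizon variable has already been eliminated, so this leverage is invisible; you would have to return to the fixed-horizon relation to use it.

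Second, the step ``with $\eta=0$, the fixed-point constraint plus $\mathrm{Var}(\theta_t^2)\neq0$ forces $\frac{d}{dt}\ln\lambda_t\equiv0$, hence $\beta_t=-\theta_t/2$'' is not just unproven but false as stated. With $\eta\equiv0$ your identity says only that $\kappa_t:=\theta_t\left(\frac{\theta_t}{2}+\beta_t\right)=-\frac{d}{dt}\ln\lambda_t$ is deterministic, and a deterministic, nonzero $\kappa$ is perfectly compatible with (A*): take $\theta$ stochastic and adapted with values in $[1,2]$ (so \eqref{hp_theta} and $\mathrm{Var}(\theta_t^2)\neq0$ hold), a bounded deterministic $g\not\equiv0$, and $\beta_t=\frac{g(t)}{\theta_t}-\frac{\theta_t}{2}$, which is bounded, so Assumption (A) holds. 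Then $\frac{1}{\gamma_t}$ is a $\PWq$-martingale, $\Eq{\frac{1}{\gamma_t}\ln Z_t}=\frac{1}{\gamma_0}\int_0^tg(u)du$, $k_t=\gamma_0x+\int_0^tg(u)du$ is not constant, and using $\Eq{\frac{1}{\gamma_u}\middle|\Fcal_s}=\frac{1}{\gamma_s}$ one checks directly that $\Eq{\xi^\ast_t\middle|\Fcal_s}=\xi^\ast_s$ for all $s<t$: the problem is consistent although $\beta_t\neq-\frac{\theta_t}{2}$. So no argument from your identity plus (A*) alone can close this step. The paper instead proves in Lemma \ref{lemma:alfa0}, via the finite-variation-martingale argument of Revuz--Yor, that $\kappa_t$ must be deterministic, and then concludes by a case dichotomy (``$\theta,\beta$ both deterministic'', excluded by (A*), or $\beta=-\theta/2$); note that the example above also probes that dichotomy, which silently excludes the family $\beta_t=\frac{g(t)}{\theta_t}-\frac{\theta_t}{2}$, admissible whenever $\theta$ is bounded away from zero. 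Any repaired proof must therefore add structure beyond your drift identity --- the horizon/wealth-independence argument, the finite-variation lemma, and an argument (or an extra hypothesis) disposing of this third family --- none of which appear in your sketch.
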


We defer the proof of Theorem \ref{main:thm} to the appendix. Theorem \ref{main:thm} points out that, whenever the market price of risk $(\theta_t)_{t\geq 0}$ is stochastic, there is a unique choice of $(\gamma_t)_{t\geq 0}$ that ensures the consistency of the problem is recovered. In particular, we find that $\big(\frac{1}{\gamma_t}\big)_{t\geq 0}$ is a $\PWq$-martingale and hence we can interpret $\gamma_t$ as the ``most rational'' prediction of the real risk aversion $\gamma_\infty\in L^{0}(\Omega,\Fcal,\PW)$, through the relation $\frac{1}{\gamma_t} = \Eq{\frac{1}{\gamma_\infty}|\Fcal_t}$. Intuitively, one can think of $\gamma_\infty$ as defining the agent's unknown true preferences $(\mbu, \PW)$ with $\mbu(\om, x) = -\frac{1}{\gamma_\infty(\om)}e^{-\gamma_\infty(\om) x}$. Therefore, in devising her investment strategy, the agent is adopting a consistent estimate of her true risk aversion given the information available on the market.\\ 
The novelty of this approach lies in the fact that the longer the agent engages with the market, the more she becomes aware of her true risk aversion. This leads to a strong interplay between the time evolution of the agent's strategy and the updating of her preference order. Moreover the results perfectly fits into the theory of forward performances, as the martingale property of $\left(\mbu_t(V_t^{\alpha^*})\right)_{t\geq 0}$ is recovered, and in Proposition \ref{prop:infinite_new} we shall also prove that for any strategy $\alpha\in\mathcal{A}$, $\left(\mbu_t(V_t^{\alpha})\right)_{t\geq 0}$ is a supermartingale. This result is extremely interesting in view of the existence of infinitely many forward performances with shape $\mbu_t(\om, x) = -\frac{1}{\gamma_t(\om)}e^{-\gamma_t(\om) x}$, since the solution proposed in Theorem \ref{main:thm} is the unique that guarantees the optimality of the strategy $\alpha^*$.    

\medskip

\begin{proposition}\label{prop:infinite_new} 
	Suppose Assumption (A) holds and let $d\left(\frac{1}{\gamma_t}\right) = \frac{1}{\gamma_t} (\eta_t dt + \beta_t dW^\PWq_t)$, $\gamma_0 > 0$ with $u_0(x) = -\frac{1}{\gamma_0}e^{-\gamma_0 x}$. Consider the process $(\eta_t)_{t\geq 0}$ defined by
	\begin{equation}\label{eq:rel_eta_beta}
		\eta_t = \frac{\theta_t(\theta_t + 2 \beta_t)}{2(\gamma_t V^*_t + 1)}.
	\end{equation}
	where
	\[dV_t^*=\frac{(\gamma_t V_t^{*} \beta_t-\theta_t)}{\gamma_t\sigma_t}\left((\mu_t-r)dt+\sigma_t dW_t\right)\]
	Then for every arbitrary $(\beta_t)_{t\geq 0}$ the stochastic utility function $\mbu_t(\om, x) = -\frac{1}{\gamma_t(\om)}e^{-\gamma_t(\om) x}$
	is a forward performance and for $\alpha_t^*=\frac{(\gamma_t V_t^{*} \beta_t-\theta_t)}{\gamma_t V_t^*\sigma_t}$ the process $\mbu_t( V_t^{\alpha^*})$ is a $\PW$-martingale. \\
	In particular, for the choice of $(\eta_t)_{t\geq 0},  (\beta_t)_{t\geq 0}$ in Theorem \ref{main:thm} and setting $\frac{1}{\gamma_0} = \Eq{\frac{1}{\gamma_t}}$, $(\mbu_t)_{t\geq 0}$ is a forward performance. 
\end{proposition}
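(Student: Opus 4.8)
The plan is to verify, in turn, the four conditions of Definition~\ref{def:fwd_performance} for the field $\mbu_t(\om,x)=-\frac{1}{\gamma_t(\om)}e^{-\gamma_t(\om)x}$. Two of them are immediate: the map $x\mapsto\mbu_t(\om,x)$ is increasing and strictly concave because $\gamma_t>0$, which holds since $\frac{1}{\gamma_t}$ solves a linear (geometric) SDE started from $\gamma_0>0$ and hence stays strictly positive, while $\mbu_0(\om,\cdot)=u_0$ is built into the initial datum. The whole weight of the proposition therefore rests on the third and fourth bullets, namely the supermartingale property of $\mbu_t(V_t^\alpha)$ for an arbitrary admissible $\alpha$ and the martingale property for the distinguished strategy $\alpha^\ast$.

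For the dynamics I would first rewrite everything under $\PW$ via $dW^{\PWq}_t=dW_t-\theta_t\,dt$, so that $d\!\left(\frac{1}{\gamma_t}\right)=\frac{1}{\gamma_t}\big((\eta_t-\beta_t\theta_t)\,dt+\beta_t\,dW_t\big)$, and recast \eqref{discounted:port} as $dV_t^\alpha=-\theta_t\pi_t\,dt+\pi_t\,dW_t$ with the control $\pi_t:=\sigma_t\alpha_t V_t^\alpha$. Applying It\^o's formula to $(x,y)\mapsto -y\,e^{-x/y}$ along $\big(V_t^\alpha,\frac{1}{\gamma_t}\big)$ and factoring out the strictly positive weight $e^{-\gamma_t V_t^\alpha}$, the drift of $\mbu_t(V_t^\alpha)$ becomes a concave quadratic in $\pi_t$ with leading coefficient $-\frac{\gamma_t}{2}<0$. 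Completing the square displays it as
\[
e^{-\gamma_t V_t^\alpha}\left[-\frac{\gamma_t}{2}\big(\pi_t-\pi_t^\ast\big)^2+R_t\right],\qquad
\pi_t^\ast=\beta_t V_t^\alpha-\frac{\theta_t}{\gamma_t},\quad
R_t=\frac{\theta_t(\theta_t+2\beta_t)\,(V_t^\ast-V_t^\alpha)}{2\,(1+\gamma_t V_t^\ast)} .
\]
The point of relation \eqref{eq:rel_eta_beta} is precisely that it is the identity obtained by requiring that the control-maximised drift vanish at the reference wealth $x=V_t^\ast$; equivalently, it forces $R_t=0$ whenever $V_t^\alpha=V_t^\ast$.

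The martingale statement then follows cleanly. One checks that the prescribed strategy satisfies $\sigma_t\alpha_t^\ast V_t^\ast=\pi_t^\ast$ and that, by construction, the wealth generated by $\alpha^\ast$ is exactly the reference process $V_t^\ast$ solving the stated SDE; hence for $\alpha=\alpha^\ast$ both the square term and $R_t$ vanish, the drift is identically zero, and $\mbu_t(V_t^{\alpha^\ast})$ is a local martingale. Invoking the boundedness in Assumption~(A) together with \eqref{hp_theta} to control the relevant stochastic exponentials, one upgrades this to a genuine $\PW$-martingale, which is both the fourth bullet of Definition~\ref{def:fwd_performance} and the second assertion of the proposition.

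The main obstacle is the supermartingale inequality for a general admissible $\alpha$. The square term $-\frac{\gamma_t}{2}(\pi_t-\pi_t^\ast)^2$ is manifestly nonpositive, so everything reduces to the sign of the remainder $R_t$; this is the delicate quantity, since for an arbitrary $\beta_t$ it does not carry a definite sign as $V_t^\alpha$ ranges over the attainable wealth levels, and it must be controlled using the admissibility restrictions defining $\mathcal{A}$ and the structure tying $V_t^\alpha$ to $V_t^\ast$. This difficulty disappears entirely in the distinguished case of Theorem~\ref{main:thm}: taking $\eta_t=0$ and $\beta_t=-\frac{\theta_t}{2}$ gives $\theta_t+2\beta_t=0$, so $R_t\equiv0$ for every $\alpha$, the drift collapses to $-\frac{\gamma_t}{2}(\pi_t-\pi_t^\ast)^2\le0$, and the supermartingale property holds for all strategies with equality along $\alpha^\ast$. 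Finally, the normalisation $\frac{1}{\gamma_0}=\Eq{\frac{1}{\gamma_t}}$ is exactly the $\PWq$-martingale starting condition for $\frac{1}{\gamma_t}$ identified in Theorem~\ref{main:thm}, which closes the verification that $(\mbu_t)_{t\ge0}$ is a forward performance in this case.
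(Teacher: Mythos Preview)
Your approach is essentially the paper's: pass to $\PW$-dynamics, compute the It\^o differential of $\mbu_t(V_t^\alpha)$, recognise the drift as a concave quadratic in the control, and complete the square. The martingale claim for $\alpha^\ast$ and the final verification for the Theorem~\ref{main:thm} parameters $\eta_t=0$, $\beta_t=-\tfrac{\theta_t}{2}$ are handled the same way in both, and your argument there is complete.

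Where you flag ``the main obstacle'' you have put your finger on exactly the point the paper glosses over. The paper analyses the parabola $x\mapsto f(x,y,\eta_t)$, shows that with $\eta_t^\ast=\eta_t(V_t^\ast)$ one has $f(x,V_t^\ast,\eta_t^\ast)\ge 0$ for all $x$, and then asserts that ``the quantity in \eqref{eq:downward_parabola} is necessarily positive for every $\alpha$''. But \eqref{eq:downward_parabola} is $f(\alpha_t,V_t^\alpha,\eta_t^\ast)$, not $f(\alpha_t,V_t^\ast,\eta_t^\ast)$, and your computation of the remainder
\[
R_t=\frac{\theta_t(\theta_t+2\beta_t)\,(V_t^\ast-V_t^\alpha)}{2(1+\gamma_t V_t^\ast)}
\]
shows precisely that the control-maximised drift at wealth level $V_t^\alpha$ need not be nonpositive for arbitrary $\beta_t$. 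So the supermartingale step for a general $\alpha$ and general $\beta_t$ is not actually proved in the paper either; the paper simply slides from $V_t^\ast$ to $V_t^\alpha$ without justification. Your honest acknowledgement of this is more accurate than the paper's phrasing, but neither argument closes the gap. In the distinguished case $\beta_t=-\tfrac{\theta_t}{2}$ the factor $\theta_t+2\beta_t$ kills $R_t$ identically, which is why both proofs go through cleanly there.
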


In Proposition \ref{prop:tutti_beta}, we complete Theorem \ref{main:thm} by highlighting the following somewhat counterintuitive result: whenever an agent endowed with a utility function with stochastic risk aversion (analogous to that in Theorem \ref{main:thm}) invests in a financial market with deterministic market price of risk, the problem is consistent regardless of the choice of $(\beta_t)_{t\geq 0}$, as long as it is a deterministic function. Nevertheless, the unique parametrization of $(\beta_t)_{t\geq 0}$ that makes $\mbu_t(\om, x)$ a forward performance is the one resulting from Theorem \ref{main:thm}.

\begin{proposition}\label{prop:tutti_beta}
	Suppose Assumption (A) holds, let the function $t\mapsto \theta_t$ be deterministic and consider $\mbu_t(\omega,x)= -\frac{1}{\gamma_t(\omega)} e^{-\gamma_t(\omega) x }$, with $(\gamma_t)_{t\geq 0}$ which solves
	\[d\left(\frac{1}{\gamma_t}\right)=\frac{1}{\gamma_t}(\eta_t dt+\beta_t d W^{\PWq}_t), \qquad \gamma_0>0.\]
	Problem \eqref{optimization} is consistent if and only if $\eta_t = 0$ and $\beta_t = f(t)$ for some deterministic function $f:[0,\infty) \to \R$. Moreover $(\mbu_t)_{t\geq 0}$ is a forward performance if and only if $\beta_t=-\frac{\theta_t}{2}$. 
\end{proposition}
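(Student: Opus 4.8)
The plan is to reduce consistency to a martingale property of the optimal discounted wealth under $\PWq$, read off the admissible dynamics of $\gamma_t$ from a pointwise drift condition, and then separate the forward‑performance clause by invoking Proposition \ref{prop:infinite_new}. The whole argument parallels the (deferred) proof of Theorem \ref{main:thm}; the only structural difference is that a deterministic $\theta$ leaves $\beta$ free rather than pinning it.

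First I would solve the single–horizon problem \eqref{optimization}. As the market is complete, the budget set is $\{\xi\in L^0(\Omega,\Fcal_t,\PW):\Eq{\xi}=x\}$, and maximizing $\Ep{-\frac{1}{\gamma_t}e^{-\gamma_t\xi}}$ subject to $\Ep{Z_t\xi}=x$ via a Lagrange multiplier $\lambda$ gives the first–order condition $e^{-\gamma_t\xi}=\lambda Z_t$, hence
\[
\xi^\ast_t=\frac{1}{\gamma_t}\left(c_t-\ln Z_t\right),\qquad c_t:=-\ln\lambda=\frac{x+\Eq{\frac{1}{\gamma_t}\ln Z_t}}{\Eq{\frac{1}{\gamma_t}}},
\]
where $c_t$ is a deterministic function of $t$ fixed by the constraint (well–posedness and integrability follow from Assumption (A) and \eqref{hp_theta} as in Theorem \ref{main:thm}). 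By completeness $V^{\alpha^\ast}_t=\xi^\ast_t$, and the discounted wealth attached to each horizon is a $\PWq$–martingale started at $x$; therefore consistency in the sense of Definition \ref{def:time_consistency} is equivalent to requiring that the whole family $(\xi^\ast_t)_{t\ge0}$ be a single $\PWq$–martingale, i.e. $\Eq{\xi^\ast_t\middle|\Fcal_s}=\xi^\ast_s$ for all $s<t$.

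Next I would compute the $\PWq$–drift of $\xi^\ast_t$. Writing $\ln Z_t=\tfrac12\int_0^t\theta_u^2\,du+\int_0^t\theta_u\,dW^\PWq_u$ and using $d\!\left(\frac{1}{\gamma_t}\right)=\frac{1}{\gamma_t}(\eta_t\,dt+\beta_t\,dW^\PWq_t)$, Itô's formula gives the drift of $\xi^\ast_t$ as
\[
\frac{1}{\gamma_t}\left[c_t'+\eta_t\left(c_t-\ln Z_t\right)-\tfrac12\theta_t^2-\beta_t\theta_t\right],
\]
so consistency is equivalent to the pointwise identity $\eta_t(c_t-\ln Z_t)-\beta_t\theta_t=\tfrac12\theta_t^2-c_t'$, holding $\mathrm{d}t\otimes\mathrm{d}\PW$–a.e. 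Under the hypothesis that $t\mapsto\theta_t$ is deterministic, the right–hand side together with $c_t,c_t',\theta_t$ are deterministic, so the only random object is $\ln Z_t$, which under $\PWq$ is Gaussian with variance $\int_0^t\theta_u^2\,du>0$ by \eqref{hp_theta}, hence of unbounded support. For the ``only if'' direction the identity forces $\eta_t\ln Z_t=\eta_tc_t-\beta_t\theta_t-\tfrac12\theta_t^2+c_t'$; the right–hand side is a.s. bounded because $\eta_t,\beta_t$ are bounded under Assumption (A) while $\theta_t,c_t,c_t'$ are deterministic, whereas the left–hand side has unbounded fluctuations on $\{\eta_t\ne0\}$. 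Turning this into the clean conclusion $\eta_t=0$ for a.e. $t$ — that is, excluding a feedback in which $\eta_t$ or $\beta_t$ is tuned to $\ln Z_t$ on the region where it stays moderate — is the delicate point; it is exactly where Assumption (A) and the fact that the identity must hold over a whole interval of times are used, and I expect this to be the main obstacle, to be handled following the corresponding step in the appendix proof of Theorem \ref{main:thm}. Once $\eta_t=0$, the identity collapses to $\beta_t\theta_t=c_t'-\tfrac12\theta_t^2$, which is deterministic, and since $\theta_t\ne0$ we get $\beta_t=f(t):=(c_t'-\tfrac12\theta_t^2)/\theta_t$ deterministic. For the converse, take $\eta_t=0$ and any deterministic $\beta$: then $\frac{1}{\gamma_t}$ is a $\PWq$–martingale, so $\Eq{\frac{1}{\gamma_t}}=\frac{1}{\gamma_0}$ and a short computation gives $c_t=\gamma_0x+\tfrac12\int_0^t\theta_u^2\,du+\int_0^t\theta_u\beta_u\,du$, whence $c_t'=\tfrac12\theta_t^2+\theta_t\beta_t$ and the identity holds automatically; thus the problem is consistent for every deterministic $\beta$.

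Finally, for the forward–performance clause I would work inside the consistent regime just found ($\eta_t=0$, $\beta$ deterministic, and the normalization $\frac{1}{\gamma_0}=\Eq{\frac{1}{\gamma_t}}$ automatic since $\frac{1}{\gamma_t}$ is a $\PWq$–martingale). By the drift computation underlying Proposition \ref{prop:infinite_new}, requiring $\mbu_t(\om,x)=-\frac{1}{\gamma_t}e^{-\gamma_t x}$ to be a forward performance is equivalent to relation \eqref{eq:rel_eta_beta}, namely $\eta_t=\frac{\theta_t(\theta_t+2\beta_t)}{2(\gamma_t V^\ast_t+1)}$. Imposing $\eta_t=0$ then forces $\theta_t(\theta_t+2\beta_t)=0$, and since $\theta_t\ne0$ this yields $\beta_t=-\frac{\theta_t}{2}$; conversely this choice is a forward performance by the last statement of Proposition \ref{prop:infinite_new} together with Theorem \ref{main:thm}. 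This singles out $\beta_t=-\frac{\theta_t}{2}$ as the unique deterministic specification that is simultaneously consistent and a forward performance.
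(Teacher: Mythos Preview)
Your overall plan is sound and the ``if'' direction of consistency is handled correctly. The comparison with the paper splits into two parts.

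For the ``only if'' direction of consistency, the paper does not argue from a pointwise drift-zero identity. It literally re-runs the proof of Theorem~\ref{main:thm}: one derives relation~\eqref{relazione:importante},
\[
\frac{k_t-k_s}{\gamma_s}=\Eq{\int_s^t \frac{1}{\gamma_u}\Bigl(\tfrac{\theta_u^2}{2}+\eta_u\ln Z_u+\theta_u\beta_u-k_t\eta_u\Bigr)du \;\middle|\; \Fcal_s},
\]
builds the martingale $M_s$ from its right-hand side, and compares it with its finite-variation expression to conclude that both the stochastic-integral part and the Lebesgue-integral part of~\eqref{eq:contradiction} vanish separately. This forces $k_t$ to be constant in $t$; then, crucially, the fact that consistency must hold for \emph{every} initial endowment $x$ (and $k_t=c_t(x+\Eq{\frac{1}{\gamma_t}\ln Z_t})$ is affine in $x$) gives $c_t\equiv\gamma_0$ and $\Eq{\frac{1}{\gamma_t}\ln Z_t}\equiv 0$, whence $\eta_t=0$. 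Lemma~\ref{lemma:alfa0} with deterministic $\theta$ then yields $\beta_t$ deterministic. Your drift-vanishing identity is the infinitesimal form of the same relation, but the gap you flag --- excluding bounded adapted $(\eta_t,\beta_t)$ that ``track'' $\ln Z_t$ --- is genuine: the unbounded-support heuristic alone does not rule out such feedback. What your sketch is missing, and what the paper exploits, is the arbitrariness of $x$; if you insert that into your drift identity (noting that $c_t$ and $c_t'$ are affine in $x$ while everything else is $x$-free) you can recover the paper's conclusion in your framework as well.

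For the forward-performance clause, note that Proposition~\ref{prop:infinite_new} as stated gives only \emph{sufficiency} of \eqref{eq:rel_eta_beta}; it does not assert equivalence. To get the ``only if'' you must return to the drift computation~\eqref{eq:fwd_p_dynamics}: with $\eta_t=0$, the quadratic in~\eqref{eq:downward_parabola} has minimum $-\tfrac12\theta_t(\theta_t+2\beta_t)$ over $\alpha_t$, and the simultaneous supermartingale (minimum $\ge 0$) and martingale-for-some-$\alpha^\ast$ (minimum $=0$) requirements force $\theta_t(\theta_t+2\beta_t)=0$, hence $\beta_t=-\tfrac{\theta_t}{2}$. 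Your conclusion is right; just cite the \emph{proof} of Proposition~\ref{prop:infinite_new} rather than its statement. (The paper's own proof of the proposition is silent on this clause, so here you are actually supplying more detail than the paper.)
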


\noindent \textbf{Example} (Multiplicative stochastic noise):
	We conclude by mentioning an alternative class of stochastic utilities, where a deterministic utility is perturbed by a multiplicative martingale noise. While the expected utility of the agent on deterministic quantities is unaffected (and so are her preferences), the randomness of the market interacts with the martingale noise, distorting the agent's perceived utility from a portfolio. We find, however, that the problem is consistent only if the parameter of the noise process suitably counterbalances the randomness of the market. A further unappealing feature is that the consistency condition by itself is not enough to pin down a unique choice of parametrization of the noise, and additionally requiring that $\mbu_t(\om, x)$ be a forward performance results in the agent not investing at all in the risky asset. 
	\begin{proposition}\label{prop:ut_w_noise}
		Let $\mbu_t(\om, x) = u(x) \cdot X_t(\om)$ where $u:\R \to \R$ is a utility function of the type of Eq. \eqref{eq:exp_ut_deterministic} and $X_t$ a solution of $dX_t = X_t \beta_t dW_t, \; X_0 = 1$ with $(\beta_t)_{t\geq0}$ a progressively measurable process such that for any $t>0$ there exists a constant $K_t > 0$ such that $\PW\left(\sup_{0\leq u\leq t}|\beta_u|<K_t\right)=1.$\\ Problem \eqref{optimization} is consistent if and only if $\left(\theta_t - \beta_t \right)^2 = k(t)$ with $k:[0,\infty) \to \R$ an arbitrary deterministic function. The optimal strategy is given by 
		\begin{equation*}
			\alpha^\ast_t = \frac{k(t)}{\gamma \sigma \xi^\ast_t}.
		\end{equation*}
		Furthermore, $(\mbu_t)_{t\geq 0}$ is a forward performance if and only if $k(t) = 0 \; \forall t \geq 0$, in which case $\alpha^\ast_t =0$.
	\end{proposition}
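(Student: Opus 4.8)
The plan is to reduce \eqref{optimization} at each fixed horizon to a static problem solvable by duality, exploit market completeness to write the optimal terminal wealth in closed form, and then extract both the consistency criterion and the forward-performance characterization from the resulting $\PWq$- and $\PW$-dynamics.

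First I would solve the horizon-$t$ problem. Because $X_t>0$ and $u$ is of exponential type, maximizing $\Ep{\mbu_t(\xi)}=\Ep{u(\xi)X_t}$ amounts to minimizing $\Ep{e^{-\gamma\xi}X_t}$ over $\Fcal_t$-measurable $\xi$ subject to the completeness budget constraint $\Ep{Z_t\xi}=x$. A pointwise Lagrangian argument yields the first-order condition $e^{-\gamma\xi^\ast_t}X_t=\lambda_t Z_t$, hence $\xi^\ast_t=-\tfrac1\gamma(\ln\lambda_t+\ln Z_t-\ln X_t)$ with $\lambda_t>0$ fixed by the constraint. Using $dW_t=dW^{\PWq}_t+\theta_t\,dt$, $\mu_t-r=-\sigma_t\theta_t$ and the explicit forms of $Z_t,X_t$, the drift completes to a square and I obtain
\[\xi^\ast_t=-\tfrac1\gamma\Big(\ln\lambda_t+\tfrac12\!\int_0^t(\theta_u-\beta_u)^2\,du+\int_0^t(\theta_u-\beta_u)\,dW^{\PWq}_u\Big).\]

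For the consistency equivalence I would compare $\Eq{\xi^\ast_t\mid\Fcal_s}$ with $\xi^\ast_s$. As the stochastic integral is a $\PWq$-martingale (a true martingale thanks to the bounds in Assumption~(A) and \eqref{hp_theta}), the two coincide for all $s<t$ if and only if $\ln\lambda_t+\tfrac12\Eq{\int_s^t(\theta_u-\beta_u)^2\,du\mid\Fcal_s}=\ln\lambda_s$; since $\lambda_s,\lambda_t$ are deterministic this is equivalent to $\Eq{\int_s^t(\theta_u-\beta_u)^2\,du\mid\Fcal_s}$ being deterministic for every $s<t$. The key step — which I expect to be the main obstacle — is to promote this to the pointwise statement that $g_t:=(\theta_t-\beta_t)^2$ is deterministic. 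The clean way is to set $A_t=\int_0^t g_u\,du$ and $\phi(t)=\Eq{A_t}$: taking unconditional expectations in the hypothesis gives $\Eq{A_t\mid\Fcal_s}=A_s+\phi(t)-\phi(s)$, so $A_t-\phi(t)$ is a continuous $\PWq$-martingale of finite variation, hence identically zero. Thus $A_t=\phi(t)$ is deterministic and differentiation gives $g_t=\phi'(t)=:k(t)$, a deterministic function. The converse is immediate: if $(\theta_t-\beta_t)^2=k(t)$, choosing $\lambda_t=\lambda_0 e^{-\frac12\int_0^t k}$ with $\lambda_0=e^{-\gamma x}$ turns $\xi^\ast_t$ into $x-\tfrac1\gamma\int_0^t(\theta_u-\beta_u)\,dW^{\PWq}_u$, a $\PWq$-martingale, which is exactly consistency.

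The optimal strategy then follows by replication: the discounted wealth obeys $dV^\alpha_t=V^\alpha_t\sigma_t\alpha_t\,dW^{\PWq}_t$, so matching diffusion coefficients against $d\xi^\ast_t=-\tfrac1\gamma(\theta_t-\beta_t)\,dW^{\PWq}_t$ forces $\sigma_t\xi^\ast_t\alpha^\ast_t=-\tfrac1\gamma(\theta_t-\beta_t)$, which determines $\alpha^\ast_t$ explicitly (note $\theta_t-\beta_t=\pm\sqrt{k(t)}$). For the forward-performance claim I would instead compute, under $\PW$, the dynamics of $\mbu_t(V^\alpha_t)=-\tfrac1\gamma e^{-\gamma V^\alpha_t}X_t$ by Itô's formula. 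With $\pi_t=V^\alpha_t\alpha_t$ the drift equals $-\tfrac1{2\gamma}e^{-\gamma V^\alpha_t}X_t\big[(\gamma\sigma_t\pi_t+\theta_t-\beta_t)^2-(\theta_t-\beta_t)^2\big]$. The supermartingale requirement of Definition~\ref{def:fwd_performance} demands this be $\le 0$ for every admissible $\alpha$, i.e. for every value of $\pi_t$; minimising the bracket over $\pi_t$ gives $-(\theta_t-\beta_t)^2$, so the condition holds for all $\alpha$ if and only if $k(t)\equiv 0$. In that case the drift reduces to $-\tfrac1{2\gamma}e^{-\gamma V^\alpha_t}X_t(\gamma\sigma_t\pi_t)^2\le 0$, with equality exactly at $\pi_t=0$, i.e. $\alpha^\ast_t=0$, which supplies the martingale-attaining strategy. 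The remaining forward-performance requirements — monotonicity and concavity of $x\mapsto\mbu_t(\om,x)$ and the normalisation $\mbu_0=u_0$ — are immediate from $X_t>0$ and $X_0=1$, and I would again invoke Assumption~(A) to ensure the relevant local martingales are true martingales.
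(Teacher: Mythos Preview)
Your proof is correct and follows essentially the same approach as the paper. The only structural difference is that the paper introduces the measure $\PW^\ast$ with $\frac{d\PW^\ast}{d\PW}=X_t$, observes that $\varphi_t:=Z_t/X_t$ is a $\PW^\ast$-martingale with $d\varphi_t=\varphi_t(\theta_t-\beta_t)\,dW^\ast_t$, and then invokes Proposition~\ref{prop:det_exp_ut} directly (with $\theta_t-\beta_t$ playing the role of the market price of risk), whereas you unroll that argument inline via the finite-variation martingale trick; for the forward-performance part both you and the paper compute the It\^o drift of $\mbu_t(V^\alpha_t)$ and analyse the resulting quadratic in the control.
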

	
	\begin{remark}
		Similarly to what we pointed out in Remark \ref{power:utility}, it is possible to prove an analogous result assuming a power utility $u(x) = \frac{x^\gamma}{\gamma}$, with $\gamma \in (0,1)$.
	\end{remark}

\begin{remark}
	The results described so far naturally lead to the question of whether an optimal random time exists to exit the investment plan. We now present a heuristic argument to illustrate why consistency offers the advantage of ensuring no regret, regardless of the exit time chosen by the agent.
	\\ Consider the family $\mathcal{T}$ of $\PW$ almost surely finite stopping times $\tau:\Omega\mapsto (0,+\infty)$, then the process $\xi^*_{t\wedge \tau}$ is a $\PWq$-martingale for every $\tau\in \mathcal{T}$ and it optimizes $\Ep{\mbu_{t\wedge\tau}(\cdot)}$. Under some technical assumption it is possible to infer that the agent is indifferent among all the possible terminal dates, as the expected reward of optimizing over all possible stopping rules does not exceed the one obtained without stopping, i.e. 
	\[\sup_{\tau\in\mathcal{T}} \Ep{\mbu_{\tau}(\xi^*_{\tau})}= \Ep{\mbu_{t}(\xi^*_t)} \quad \forall\,t>0.\]
	Furthermore, one can find that $\xi^*_{\tau}$ maximizes $\Ep{\mbu_{\tau}(\cdot)}$ over all the $\Fcal_{\tau}$-measurable random variables $\xi$ such that $\Ep{|\mbu_{\tau}(\xi)|}<\infty$ and $\Eq{\xi}=x$.
\end{remark}

\appendix 
\section{Proofs}
This appendix is entirely devoted to the proofs of previous section. The proof of Theorem 3.1
builds on the extension to state-dependent utilities of a well-known result on the characterization of the maximizer of an optimization problem. The detailed statement and proof of this result are provided in Appendix B. In addition, we prove an auxiliary lemma that will contribute to the proof of the main result. The proof of Proposition \ref{prop:infinite_new} complements that of Theorem \ref{main:thm} and shows that the utility function we derived previously is actually a forward performance. We then conclude the section with the proofs of Propositions \ref{prop:tutti_beta} and \ref{prop:ut_w_noise}, which are at this point merely an application of the previous results.

\subsection{Proof of Proposition \ref{prop:det_exp_ut}}
Consider $t>0$ fixed. Since the market is assumed to be complete, we can recast Problem \eqref{optimization} as the following infinite dimensional problem
\begin{equation}\label{static}
	\sup \left\{\Ep{u(\xi)}\mid \xi \in L^1(\Omega,\Fcal_t,\PWq) \text{ and } \Eq{\xi} = x\right\}.
\end{equation}  
The application of Lemma \ref{utile} then yields the optimum
\begin{equation*}
	\xi^\ast_t = x + \frac{1}{\gamma}\Eq{\ln(Z_t)} - \frac1\gamma \ln(Z_t),
\end{equation*}
with $\PW(|\xi^\ast_t|<k)=1$ for some $k\in\R$, as a consequence of \eqref{hp_theta}.
Considering some time $s < t$ and mimicking the argument in \cite{Bjork09}, Proposition 20.11, we find
\begin{equation*}\label{eq:price_claim_exp}
	\Eq{\xi_t^*\middle| \Fcal_s} = x + \frac{1}{\gamma} \left[E_0^t - E_s - \ln(Z_s)\right],
\end{equation*}
where $E_0^t = \Eq{\frac12 \int_0^t \theta_u^2 du}$ and $E_s = \Eq{\frac12 \int_s^t \theta_u^2 du\middle|\Fcal_s}$. In order to ensure the problem is time consistent we require $\Eq{\xi_t^*\middle| \Fcal_s}=\xi^*_s$, i.e. 
\begin{equation}\label{eq:exp_timeconsistency_cond}
	x + \frac{1}{\gamma} \left[E_0^t - E_s - \ln(Z_s)\right] = x + \frac1\gamma\Eq{\ln(Z_s)} - \frac1\gamma \ln(Z_s),
\end{equation}
where the dynamics of process $\ln(Z_t)$ under $\PWq$ is given by $d\ln (Z_t)= \frac{1}{2}\theta_t^2dt+\theta_t dW^{\PWq}_t$.
Observing that 
\begin{align*}
	E_0^t = \Eq{\frac12 \int_0^t \theta_u^2 du} = \underbrace{\Eq{\frac12 \int_0^s \theta_u^2 du}}_{\Eq{\ln(Z_s)}} + \Eq{\frac12 \int_s^t \theta_u^2 du}, 
\end{align*}
Eq. \eqref{eq:exp_timeconsistency_cond} boils down to 
\begin{equation}\label{eq:indep}
	\Eq{\frac12 \int_s^t \theta_u^2 du\middle|\Fcal_s} = \Eq{\frac12 \int_s^t \theta_u^2 du}.
\end{equation}
Because the choice of $s$ is arbitrary, condition \eqref{eq:indep} must hold for all $s < t$. Whenever $t\mapsto\theta^2_t$ is deterministic the latter is clearly satisfied. We conclude by showing the converse implication. First, applying Fubini-Tonelli Theorem we can rewrite \eqref{eq:indep} as 
\begin{equation*}
	\frac12 \int_s^t\Eq{\theta^2_u \middle|\Fcal_s} du = \frac12 \int_s^t\Eq{\theta^2_u} du,
\end{equation*}
which in turn implies that $\Eq{\theta^2_u \middle|\Fcal_s} = \Eq{\theta^2_u}$ for almost every $u \in (s, t)$. Furthermore, for $u < t$ fixed we also have that $\Eq{\theta^2_u \middle|\Fcal_s} = \Eq{\theta^2_u}$ for almost every $0 \leq s < u$. Define the stochastic process $I(s) := \Eq{\theta^2_u \middle|\Fcal_s}$ for $0 \leq s < u$. Since \eqref{hp_theta} holds true, $\theta^2_u \in L^1(\Om, \Fcal_u, \PWq)$, then $I(s)$ is a martingale that admits a representation of the form $I(s) = I_0 + \int_0^s \psi_t dW^\PWq_t$ for a suitable $\psi_t$ and $I_0 \in \R$. Without loss of generality we consider the continuous version of $I(s)$ and hence, letting $s \to u$ we obtain $I(s) \to I(u) = \Eq{\theta^2_u \middle| \Fcal_u} = \theta^2_u$. Recalling that $\Eq{\theta_u^2\middle|\Fcal_s} = \Eq{\theta_u^2}$ for all $s < u$ we conclude that $\Eq{\theta_u^2} = \theta^2_u$ and, since the choice of $u \in (s,t)$ was arbitrary, the function $t\mapsto \theta^2_t$ must be a deterministic function. \hfill \qedsymbol

\subsection{Proof of Theorem \ref{main:thm}}
Before proving the main result we need the following auxiliary lemma.

\begin{lemma}\label{lemma:alfa0}
	Suppose that \eqref{hp_theta} and Assumption (A) hold. Consider $\mbu_t(\omega,x)= -\frac{1}{\gamma_t(\omega)} e^{-\gamma_t(\omega) x }$ with $(\gamma_t)_{t\geq 0}$ being the solution of the SDE 
	\begin{equation}\label{eq:mrtg_gamma_dyn}
		d\left(\frac{1}{\gamma_t}\right)=\frac{1}{\gamma_t}\beta_t d W^{\PWq}_t, \qquad \gamma_0>0,
	\end{equation}
	then problem \eqref{optimization} is consistent if and only if one of the following is verified:
	\begin{itemize}
		\item $\beta_t = f(t)$ and $\theta_t= g(t)$ with $f,g$ deterministic functions of time;
		\item $\beta_t = -\frac{\theta_t}{2}$.
	\end{itemize}
\end{lemma}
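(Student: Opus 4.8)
The plan is to reduce \eqref{optimization} to a static problem, solve it with a first-order condition, and recast consistency as a drift condition on an auxiliary product process. Since the market is complete, as in the proof of Proposition \ref{prop:det_exp_ut} I would rewrite the time-$t$ problem as $\sup\{\Ep{\mbu_t(\xi)} : \xi\in L^1(\Omega,\Fcal_t,\PWq),\ \Eq{\xi}=x\}$ and apply the state-dependent extension of the characterization of the maximizer (Lemma \ref{utile}, proved in Appendix B). For $\mbu_t(\om,x)=-\frac{1}{\gamma_t}e^{-\gamma_t x}$ one has $\partial_x\mbu_t(\om,x)=e^{-\gamma_t x}$, so the optimality condition reads $e^{-\gamma_t\xi_t^\ast}=\lambda Z_t$ for a multiplier $\lambda>0$, giving $\xi_t^\ast=-\frac{1}{\gamma_t}(\ln\lambda+\ln Z_t)$. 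Imposing the budget constraint $\Eq{\xi_t^\ast}=x$ and using that $\frac{1}{\gamma_t}$ is a $\PWq$-martingale under \eqref{eq:mrtg_gamma_dyn} (so that $\Eq{\frac{1}{\gamma_t}}=\frac{1}{\gamma_0}$), I would solve for $\lambda$ and obtain the explicit optimum $\xi_t^\ast=\frac{1}{\gamma_t}\bigl(\gamma_0 x+\gamma_0\Eq{\frac{1}{\gamma_t}\ln Z_t}-\ln Z_t\bigr)$; note that when $\beta_t=-\frac{\theta_t}{2}$ the middle term will vanish, recovering exactly the $\xi_t^\ast$ of Theorem \ref{main:thm}.

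Writing $Y_t:=\frac{1}{\gamma_t}$ and $L_t:=\ln Z_t$, I would next turn the requirement $\Eq{\xi_t^\ast\middle|\Fcal_s}=\xi_s^\ast$ of Definition \ref{def:time_consistency} into a tractable condition. Recall $dL_t=\frac12\theta_t^2\,dt+\theta_t\,dW^{\PWq}_t$ and $dY_t=Y_t\beta_t\,dW^{\PWq}_t$, so It\^o's product rule shows that the $\PWq$-drift of $Y_tL_t$ is $\Phi_t:=Y_t\theta_t\bigl(\frac{\theta_t}{2}+\beta_t\bigr)$. Since $Y_t$ is a $\PWq$-martingale and $\Eq{Y_tL_t}$ is a constant, the terms $\gamma_0 xY_t+\gamma_0\Eq{Y_tL_t}Y_t$ of $\xi_t^\ast$ reduce under $\Eq{\cdot\middle|\Fcal_s}$ to the same expressions with $Y_s$, while $\Eq{Y_tL_t\middle|\Fcal_s}=Y_sL_s+\Eq{\int_s^t\Phi_u\,du\middle|\Fcal_s}$; subtracting $\xi_s^\ast$ and using $\Eq{Y_tL_t}-\Eq{Y_sL_s}=\Eq{\int_s^t\Phi_u\,du}$, the consistency identity collapses to $\gamma_0 Y_s\Eq{\int_s^t\Phi_u\,du}=\Eq{\int_s^t\Phi_u\,du\middle|\Fcal_s}$ for every $s<t$. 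By Fubini (legitimate because Assumption (A) and \eqref{hp_theta} bound $\theta,\beta$ and yield the needed integrability) this reads $\gamma_0 Y_s\int_s^t\Eq{\Phi_u}\,du=\int_s^t\Eq{\Phi_u\middle|\Fcal_s}\,du$.

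The core of the argument is to extract a pointwise statement. Differentiating in $t$ yields $\Eq{\Phi_u\middle|\Fcal_s}=\gamma_0 Y_s\Eq{\Phi_u}$ for a.e. $u>s$. Fixing such a $u$ and writing $\psi_u:=\theta_u\bigl(\frac{\theta_u}{2}+\beta_u\bigr)$, the left-hand side is the $\PWq$-martingale $s\mapsto\Eq{Y_u\psi_u\middle|\Fcal_s}$, while the right-hand side is the constant $\gamma_0\Eq{\Phi_u}$ times the $\PWq$-martingale $Y_s$; passing to continuous modifications and letting $s\uparrow u$ forces $Y_u\psi_u=\gamma_0\Eq{\Phi_u}Y_u$, hence, since $Y_u>0$, that $\psi_u$ equals the constant $\gamma_0\Eq{\Phi_u}$. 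Thus consistency is equivalent to $\psi_u$ being deterministic for a.e. $u$, the converse being a direct verification: if $\Phi_u=\psi_u Y_u$ with $\psi_u$ deterministic, both sides of the integral identity equal $Y_s\int_s^t\psi_u\,du$. I would then read off the stated dichotomy: if $\psi\equiv 0$ then, as $\theta_u\neq 0$ a.s. by \eqref{hp_theta}, necessarily $\beta_t=-\frac{\theta_t}{2}$; in the complementary regime a deterministic $\theta_t=g(t)$ together with $\psi$ deterministic forces $\beta_t=f(t)$ deterministic.

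I expect the delicate point to be exactly the limiting step $s\uparrow u$: one must pass to continuous versions of the two martingales and justify the integrability of $Y_tL_t$ and $\Phi_t$ from Assumption (A) and \eqref{hp_theta} before invoking a martingale-representation/continuity argument to conclude that $\psi_u$ is almost surely constant. The other step requiring care is the final organization of the case analysis, separating a deterministic market price of risk (where a deterministic $\psi$ pins down a deterministic $\beta$) from the vanishing case $\psi\equiv0$ (which delivers $\beta_t=-\frac{\theta_t}{2}$), so that the two bullet conditions exhaust the consistent configurations.
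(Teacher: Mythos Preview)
Your reduction to the static problem, the optimal profile $\xi_t^\ast$, and the collapse of consistency to the identity $\gamma_0 Y_s\,\Eq{\int_s^t\Phi_u\,du}=\Eq{\int_s^t\Phi_u\,du\mid\Fcal_s}$ coincide verbatim with the paper's Eq.~\eqref{prop:unique}. The divergence is in how you extract ``$\psi_u=\theta_u(\tfrac{\theta_u}{2}+\beta_u)$ is deterministic'' from this identity. The paper does not differentiate in $t$: instead it introduces the change of measure $d\PWq_L/d\PWq=L_t:=\gamma_0/\gamma_t$, which absorbs the prefactor $\gamma_0 Y_s$ on the left, and after an It\^o product computation for $\frac{1}{L_t}\int_0^t L_u\kappa_u\,du$ reduces the identity to $\Eql{\int_s^t\kappa_u\,du}=\Eql{\int_s^t\kappa_u\,du\mid\Fcal_s}$ with $\kappa_u=\psi_u$. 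The conclusion then follows in one line: the centered process $\int_0^t\kappa_u\,du-\Eql{\int_0^t\kappa_u\,du}$ is a continuous $\PWq_L$-martingale of finite variation, hence constant by \cite[Prop.~IV.1.2]{RY99}, so $\kappa$ is deterministic. Your route---differentiate in $t$, swap the order of the ``for all $s$'' and ``a.e.\ $u$'' via Fubini, then let $s\uparrow u$ along continuous versions of two $\PWq$-martingales---is also correct and is more elementary in that it avoids the auxiliary measure; the trade-off is precisely the version/null-set bookkeeping you flag as delicate, which the paper's finite-variation argument sidesteps entirely. Your final organization of the dichotomy (``$\psi\equiv0$ gives $\beta_t=-\tfrac{\theta_t}{2}$; deterministic $\theta$ with deterministic $\psi$ gives deterministic $\beta$'') matches the paper's own wording of the case analysis.
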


\begin{proof}[Proof of Lemma \ref{lemma:alfa0}]
	Let $t>0$ be fixed, and observe that since the market is assumed to be complete, Problem \eqref{optimization} is equivalent to the following infinite dimensional problem
	\begin{equation*}
		\sup \left\{\Ep{\mbu_t(\xi)}\mid  \xi \in L^1(\Omega,\Fcal_t,\PWq) \text{ and } \Eq{\xi} = x\right\}.
	\end{equation*}  
	Applying Lemma \ref{utile} we can show that the optimal profile is
	\begin{equation}
		\xi^\ast_t = \frac{1}{\gamma_t} \left\{c\left(x+ \Eq{\frac{1}{\gamma_t}\ln(Z_t)}\right)  -\ln (Z_t)\right\},
	\end{equation}
	where $c=\frac{1}{\Eq{1/\gamma_t}}$ (which does not depend on $t$ by assumption) and $Z_t = \frac{d\PWq}{d\PW}$. Notice that $\xi_t^*\in L^1(\Omega,\Fcal,\PWq)$ since $\ln Z_t$ is bounded from \eqref{hp_theta} and $\frac{1}{\gamma_t}=\exp{\left(-\frac{1}{2}\int_0^t\beta_u^2du+\int_0^t\beta_udW_u^{\PWq}\right)}$ is in $ L^1(\Omega,\Fcal,\PWq)$ from Assumption (A).
	
	By definition $(1/\gamma_t)_t$ is a strictly positive martingale in the filtration defined by the $\PWq$-Brownian motion $dW^{\PWq}_t=dW_t-\theta_tdt$. Recalling that $d\ln (Z_t)= \frac{1}{2}\theta_t^2dt+\theta_t dW^{\PWq}_t$ and given the dynamics of $\frac{1}{\gamma_t}$ in \eqref{eq:mrtg_gamma_dyn} we have 
	\begin{eqnarray*}
		d\left(\frac{1}{\gamma_t}\ln(Z_t)\right) & = & \frac{\theta_t}{\gamma_t}\left(\frac{\theta_t}{2}+\beta_t\right)dt + \frac{1}{\gamma_t}\left(\theta_t+\beta_t \ln(Z_t) \right) dW^{\PWq}_t,
	\end{eqnarray*} 
	and hence the conditional expectation with respect to $\Fcal_s$ reads
	\[\Eq{\frac{1}{\gamma_t}\ln (Z_t)\middle | \Fcal_s}= \frac{1}{\gamma_s}\ln (Z_s)+\Eq{\int_s^t \frac{\theta_u}{\gamma_u}\left(\frac{\theta_u}{2}+\beta_u\right)du \middle| \Fcal_s}. \]
	We now move to the consistency condition $\Eq{\xi^\ast_t \middle| \Fcal_s}=\xi^\ast_s$ and collecting previous computations we obtain that it is in fact equivalent to  
	\begin{eqnarray*} & & c\left(x+E_{\PWq}\left[\frac{1}{\gamma_t}\ln (Z_t)\right]\right)\Eq{\frac{1}{\gamma_t}\middle | \Fcal_s}-\Eq{\frac{1}{\gamma_t}\ln (Z_t)\middle | \Fcal_s}
		\\ & = & c\left(x+E_{\PWq}\left[\frac{1}{\gamma_s}\ln (Z_s)\right]\right)\frac{1}{\gamma_s}-\frac{1}{\gamma_s}\ln (Z_s)
	\end{eqnarray*}
	and a few calculations then yield
	\begin{equation}\label{prop:unique}\frac{c}{\gamma_s}\left(\Eq{\int_s^t \frac{\theta_u}{\gamma_u}\left(\frac{\theta_u}{2}+\beta_u\right)du}\right)=
		\Eq{\int_s^t \frac{\theta_u}{\gamma_u}\left(\frac{\theta_u}{2}+\beta_u\right)du \middle| \Fcal_s}.
	\end{equation}
		
	We define $L_t:= \frac{c}{\gamma_t}$ and observe it is a positive martingale with $\Eq{L_t} = 1$ and hence we can specify a new measure $\PWq_L$ through $L_t = \frac{d\PWq_L}{d\PWq}$. Let us denote $\kappa_u := \theta_u \left(\frac{\theta_u}{2}+\beta_u\right)$, then Eq. \eqref{prop:unique} can be rewritten as 
	\begin{equation}
		\Eql{\frac{1}{L_t}\int_s^t L_u\kappa_u du} =
		\Eql{\frac{1}{L_t} \int_s^t L_u \kappa_u du \middle| \Fcal_s}.
	\end{equation}
	The linearity of the expectation allows to rearrange the above equation as follows 
	\begin{equation}\label{prop:unique_under_qL}
		\Eql{\frac{1}{L_t}\int_0^t L_u\kappa_u du} - \Eql{\frac{1}{L_s}\int_0^s L_u\kappa_u du} =
		\Eql{\frac{1}{L_t} \int_0^t L_u \kappa_u du \middle| \Fcal_s} - \frac{1}{L_s} \int_0^s L_u \kappa_u du.
	\end{equation}
	We notice that
	$$
	d\left(\frac{1}{L_t}\int_0^t L_u\kappa_u du\right) = \underbrace{\int_0^tL_u\kappa_u du \cdot d\left(\frac{1}{L_t}\right)}_{=: dM_t} + \frac{1}{L_t} L_t \kappa_t dt,
	$$
	where $M_t$ is a $\QW_L$-martingale. In integral notation the latter reads
	\begin{equation*}
		\frac{1}{L_t}\int_0^t L_u\kappa_u du = M_t + \int_0^t \kappa_u du,
	\end{equation*}
	and substitution into \eqref{prop:unique_under_qL} yields
	\begin{equation}\label{eq:utile_x_mrtg}
		\Eql{\int_s^t \kappa_u du} =  \Eql{\int_s^t \kappa_u du \middle | \Fcal_s}.
	\end{equation}
	
	We point out that the above is trivially verified whenever $\kappa_t$ is a deterministic function. In turn, this holds when either:
	\begin{itemize}
		\item $\beta_t = f(t)$ and $\theta_t =g(t)$ with $f,g$ deterministic functions of time, or
		\item $\beta_t = -\frac{\theta_t}{2}$.
	\end{itemize}
	
	We now prove the converse implication. Consider the process 
	$$
	Y_t := \int_0^t \kappa_u du - \Eql{\int_0^t \kappa_u du},
	$$
	computing its conditional expectation and using \eqref{eq:utile_x_mrtg} we find that it is a $\QW_L$-martingale. Simple inspections show that $Y_t$ is of finite variation and hence, by \cite[][Proposition IV.1.2]{RY99}, we conclude $Y_t$ is constant and therefore $\theta_u \left(\frac{\theta_u}{2}+\beta_u\right)$ must be a deterministic function of time. This occurs when both $\theta_t, \beta_t$ are deterministic, or when $\beta_t= -\frac{\theta_t}{2}$. 
\end{proof} 

\begin{proof}[Proof of Theorem \ref{main:thm}] Fix $t>0$, and consider Problem \eqref{optimization} for the state-dependent utility function $\mbu_t(\om, x) = -\frac{1}{\gamma_t(\om)}e^{-\gamma_t(\om) x}$. We reformulate the maximization problem as 
	\begin{equation*}
		\sup \left\{\Ep{\mbu_t(\xi)}\mid \xi \in L^1(\Omega,\Fcal_t,\PWq) \text{ and } \Eq{\xi} = x\right\}.
	\end{equation*} 
	Applying Lemma \ref{utile} to $\mbu_t(\om, x)$ yields the $t$-optimal discounted profile
	\begin{equation}\label{eq:opt_profile_sdu}
		\xi^\ast_t = -\frac{1}{\gamma_t}\ln\left(\lambda Z_t\right).
	\end{equation}
	One can then recover $\lambda$ by plugging \eqref{eq:opt_profile_sdu} into the budget constraint $\Eq{\xi^\ast_t} = x$ 
	\begin{equation*}
		\ln(\lambda)=\frac{1}{\Eq{1/\gamma_t}}\left(-x-\Eq{\frac{1}{\gamma_t}\ln (Z_t)}\right).
	\end{equation*}
	Substituting in \eqref{eq:opt_profile_sdu} and setting $c_t = \frac{1}{\Eq{1/\gamma_t}}$ we finally obtain
	\begin{equation}\label{eq:opt_profile_sdu_final}
		\xi^\ast_t = \frac{1}{\gamma_t} \left\{c_t\left(x+ \Eq{\frac{1}{\gamma_t}\ln (Z_t)}\right)  -\ln (Z_t)\right\}.
	\end{equation}
	Notice that $\xi_t^*\in L^1(\Omega,\Fcal,\PWq)$ since $\ln (Z_t)$ is bounded from \eqref{hp_theta} and $\frac{1}{\gamma_t}=\exp{\left(\int_0^t(\eta_u-\frac{1}{2}\beta_u^2)du+\int_0^t\beta_udW_u^{\PWq}\right)}$ is in $ L^1(\Omega,\Fcal,\PWq)$ from Assumption (A).
	Consider now some $s<t$ and denote by $\xi^\ast_s$ the optimal discounted profile for the investment horizon $s$, the problem is consistent whenever the condition $\xi^\ast_s = \Eq{\xi^\ast_t\middle|\Fcal_s}$ is verified. 
	
	By assumption $d\left(\frac{1}{\gamma_t}\right)=\frac{1}{\gamma_t}(\eta_t dt+\beta_t dW^{\PWq}_t)$
	for $\eta_t,\beta_t$ adapted processes and with $W^{\PWq}_t$ a $\PWq$-Brownian motion defined by $dW^{\PWq}_t=dW_t-\theta_tdt$. 
	Observing that $d\ln (Z_t)= \frac{1}{2}\theta_t^2dt+\theta_t dW^{\PWq}_t$, stochastic Leibniz rule yields 
	\begin{eqnarray}\label{eq:dyn_gamma_logz}
		d\left(\frac{1}{\gamma_t}\ln (Z_t)\right) & = & \frac{1}{\gamma_t}\left(\frac{\theta_t^2}{2}+\eta_t\ln (Z_t)+\theta_t\beta_t\right)dt + \frac{1}{\gamma_t}\left(\theta_t+\beta_t \ln (Z_t) \right) dW^{\PWq}_t.
	\end{eqnarray} 
	Recall that $\theta, \eta, \beta$ satisfy the boundedness conditions in Assumption (A). Then in particular we can rewrite the latter in integral form and take the expectation conditional on $\Fcal_s$ to obtain  
	\begin{equation}\label{cond:entropy}
		\Eq{\frac{1}{\gamma_t}\ln (Z_t)\middle | \Fcal_s}= \frac{1}{\gamma_s}\ln (Z_s)+\Eq{\int_s^t \frac{1}{\gamma_u}\left(\frac{\theta_u^2}{2}+\eta_u\ln(Z_u)+\theta_u\beta_u\right)du \middle| \Fcal_s}.
	\end{equation}
	Plugging \eqref{eq:opt_profile_sdu_final} into the time consistency condition we obtain
	\begin{eqnarray}\label{costanti} 
		& & c_t\left(x+\Eq{\frac{1}{\gamma_t}\ln(Z_t)}\right)\Eq{\frac{1}{\gamma_t}\middle | \Fcal_s}-\Eq{\frac{1}{\gamma_t}\ln (Z_t)\middle | \Fcal_s}
		\\ \nonumber & = & c_s\left(x+\Eq{\frac{1}{\gamma_s}\ln (Z_s)}\right)\frac{1}{\gamma_s}-\frac{1}{\gamma_s}\ln (Z_s).
	\end{eqnarray}
	Let us set $k_t = c_t\left(x+\Eq{\frac{1}{\gamma_t}\ln (Z_t)}\right)$ and observing that 
	\begin{equation*}
		\Eq{\frac{1}{\gamma_t}\middle| \Fcal_s}=\frac{1}{\gamma_s}+\Eq{\int_s^t \frac{1}{\gamma_u}\eta_u du \middle| \Fcal_s},
	\end{equation*}
	Eq. \eqref{costanti} boils down to 
	\begin{equation*}
		\Eq{\frac{1}{\gamma_t}\ln (Z_t)\middle | \Fcal_s}-\frac{1}{\gamma_s}\ln (Z_s) = (k_t-k_s)\frac{1}{\gamma_s}+k_t\Eq{\int_s^t \frac{1}{\gamma_u}\eta_u du \middle| \Fcal_s}.
	\end{equation*}
	Substituting \eqref{cond:entropy} in the latter we find
	\begin{equation*}
		\Eq{\int_s^t \frac{1}{\gamma_u}\left(\frac{\theta_u^2}{2}+\eta_u\ln (Z_u)+\theta_u\beta_u\right)du \middle| \Fcal_s}= (k_t-k_s)\frac{1}{\gamma_s}+k_t\Eq{\int_s^t \frac{1}{\gamma_u}\eta_u du \middle| \Fcal_s},
	\end{equation*}
	and a simple algebraic rearrangement then provides
	\begin{equation}\label{relazione:importante}\frac{k_t-k_s}{\gamma_s}=\Eq{\int_s^t \frac{1}{\gamma_u}\left(\frac{\theta_u^2}{2}+\eta_u\ln (Z_u)+\theta_u\beta_u-k_t\eta_u\right)du \middle| \Fcal_s},
	\end{equation}
	which has to hold for any couple $s,t \in [0,\infty)$ such that $s<t$. 
	
	\medskip Observe that, setting $\eta_t = 0$ and $\beta_t = -\frac{\theta_t}{2}$ for all $t\geq0$ we have $k_t = k_s = x \cdot\gamma_0$ with $1/\gamma_0 = \Eq{1/\gamma_t}$ and therefore Eq. \eqref{relazione:importante} is trivially verified.
	
	\medskip
	
	We now show the reverse implication. 
	For $s<t$ we define the following $\PWq$-martingale:
	\begin{equation*}
		M_s := \Eq{\int_0^t \frac{1}{\gamma_u}\left(\frac{\theta_u^2}{2}+\eta_u\ln (Z_u)+\theta_u\beta_u-k_t\eta_u\right)du \middle|\Fcal_s}.
	\end{equation*}
	In light of Eq. \eqref{relazione:importante} we may write
	\begin{equation}\label{eq:q_mart2}
		M_s = \int_0^s \frac{1}{\gamma_u}\left(\frac{\theta_u^2}{2}+\eta_u\ln (Z_u)+\theta_u\beta_u-k_t\eta_u\right)du + \frac{k_t-k_s}{\gamma_s}.
	\end{equation}
	Observe that
	\begin{equation*}
		d\left(\frac{k_t-k_s}{\gamma_s}\right)= (k_t-k_s)\left(\frac{1}{\gamma_s}(\eta_s ds+\beta_sdW^{\PWq}_s)\right)-\frac{1}{\gamma_s}dk_s, 
	\end{equation*}
	with
	$$
	dk_s=\left(x+\Eq{\frac{1}{\gamma_s}\ln (Z_s)}\right)\left(-c_s^2\Eq{\frac{\eta_s}{\gamma_s}}\right)ds+ c_sd \Eq{\frac{1}{\gamma_s}\ln (Z_s)}
	$$
	and
	$$
	d\Eq{\frac{1}{\gamma_s}\ln(Z_s)} = \Eq{\frac{1}{\gamma_s}\left(\frac{\theta_s^2}{2} + \eta_s \ln(Z_s) + \beta_s \theta_s\right)}ds.
	$$
	Since $M_s$ is a martingale it has null drift and volatility $\frac{(k_t - k_s)\beta_s}{\gamma_s}$.
	Therefore we can write $M_s$ as
	\begin{equation*}
		M_0 + \int_0^s \frac{(k_t - k_u)\beta_u}{\gamma_u}dW_u^\PWq,
	\end{equation*}
	with $M_0 = \Eq{M_s}$. Letting $s \to t$ one would get
	\begin{equation}\label{eq:contradiction}
		M_0 + \int_0^t \frac{(k_t - k_u)\beta_u}{\gamma_u}dW_u^\PWq = \int_0^t \frac{1}{\gamma_u}\left(\frac{\theta_u^2}{2}+\eta_u\ln (Z_u)+\theta_u\beta_u-k_t\eta_u\right)du,
	\end{equation}
	Let $\Borel_{[0,\infty)}$ denote the Borel $\sigma$-algebra of $[0,\infty)$ and $Leb$ the Lebesgue measure on $\Borel_{[0,\infty)}$. Eq. \eqref{eq:contradiction} holds true if and only if both the left- and right-hand sides are null. This occurs if and only if 
	\[\PW\left((k_t - k_u)\beta_u=0\text{ and }\eta_u= \frac{-\frac{\theta^2_u}{2}-\theta_u\beta_u}{\ln(Z_u)-k_t}\right)=1  \quad \text{ for every } u\in [0,\infty)\setminus A \text{ with } Leb(A)=0.\]
	Indeed, up to a modification of the processes, we can assume that the previous property holds for every $u\in [0,\infty)$. On the other hand the choice of $t\in[0,\infty)$ is arbitrary and the definition of the process $\eta$ cannot depend on $t$. This implies that $k_t=k$ for every $t\in [0,\infty)$.
	Since $k=k_t = c_t\left(x+\Eq{\frac{1}{\gamma_t}\ln Z_t}\right)$, this implies
	\[0= (c_t-c_s)x+ c_t\Eq{\frac{1}{\gamma_t}\ln Z_t}-c_s\Eq{\frac{1}{\gamma_s}\ln Z_s}\quad \forall\,x\in\R.\]
	The previous identity leads necessarily to $c_t=\gamma_0$ and $\Eq{\frac{1}{\gamma_t}\ln Z_t}=0$, for every $t\in[0,\infty)$. Given the dynamics $d\left(\frac{1}{\gamma_t}\right)=\frac{1}{\gamma_t}(\eta_t dt+\beta_t dW^{\PWq}_t)$, this is the case if and only if $\PW(\eta_t=0)=1$ for every $t\in [0,\infty)$ (again up to a modification of the process). By Lemma \ref{lemma:alfa0} jointly with the assumption that $\text{Var}(\theta^2_t) \neq 0$ we have that $\beta_t = -\frac{\theta_t}{2}$.
	
	\medskip 
	
	In order to compute the optimal wealth $\xi^\ast_t$, we first notice that for $\eta_t = 0$ and $\beta_t = -\frac{\theta_t}{2}$, Eq. \eqref{eq:dyn_gamma_logz} collapses to
	\begin{equation*}
		d\left(\frac{1}{\gamma_t}\ln (Z_t)\right) = \frac{1}{\gamma_t}\left(\theta_t - \frac{\theta_t}{2} \ln (Z_t) \right) dW^{\PWq}_t,
	\end{equation*}
	so that $\Eq{\frac{1}{\gamma_t}\ln (Z_t)} = 0$. Furthermore, $\frac{1}{\gamma_t}$ is a $\QW$-martingale, with $\Eq{\frac{1}{\gamma_t}} = \frac{1}{\gamma_0}$ and in particular $c_t = \gamma_0$. Substituting these latter results into \eqref{eq:opt_profile_sdu_final} yields 
	\begin{equation*}
		\xi^\ast_t = \frac{1}{\gamma_t} \left(\gamma_0 x  -\ln (Z_t)\right).
	\end{equation*}
	Finally, since the market is complete by assumption, the optimal strategy $\alpha^\ast$ is the hedging strategy for the above $t$-claim. The stochastic differential of $\xi_t^\ast$ reads
	\begin{equation*}
		d\xi^\ast_t = - \left(\frac{\theta_t}{2} \xi_t^\ast + \frac{\theta_t}{\gamma_t}\right) dW^{\PWq}_t.
	\end{equation*}
	Equating the coefficient above with the diffusion coefficient in the portfolio dynamics \eqref{discounted:port} yields the optimal strategy
	\begin{equation*}\label{eq:optimal_strategy_proof}
		\alpha_t^\ast = -\frac{1}{\gamma_t \sigma_t \xi^\ast_t} \left(\theta_t + \frac{\theta_t}{2} \gamma_t \xi^\ast_t \right).
	\end{equation*}
	
	\medskip
	We conclude the proof by showing that for this choice of $(\eta_t, \beta_t)$ and the optimal strategy $\alpha^\ast$ the process $\mbu_t(V_t^{\alpha^\ast})$ is a $\PW$-martingale. Firstly, as $V_t^{\alpha^\ast}$ is the replicating portfolio for $\xi^\ast_t$ we have that $\PW$-a.s. $V_t^{\alpha^\ast} = \xi^\ast_t$, therefore we plug $\xi_t^\ast$ into $\mbu_t$ and we obtain
	\begin{equation}\label{eq:utility_at_opt}
		\mbu_t(\xi_t^\ast) = -\frac{1}{\gamma_t}Z_t e^{-\gamma_0 x}.
	\end{equation}
	The process $\frac{1}{\gamma_t}$ has $\PW$-dynamics $d\left(\frac{1}{\gamma_t}\right) = \frac{1}{\gamma_t} \left(\frac{\theta_t^2}{2} dt - \frac{\theta_t}{2}dW_t\right)$ and consequently the stochastic differential of \eqref{eq:utility_at_opt} is 
	\begin{equation*}
		d\left(\mbu_t(\xi_t^\ast)\right) = -\frac{1}{\gamma_t}Z_t\frac{\theta_t}{2} dW_t,
	\end{equation*}
	which in turn implies $\mbu_t(V_t^{\alpha^\ast})$ is a martingale under the reference probability measure $\PW$.
\end{proof}

\subsection{Proof of Proposition \ref{prop:infinite_new}}
Observe that, given the dynamics of $\frac{1}{\gamma_t}$ under the martingale measure $\PWq$ and recalling that $dW^\PWq_t = dW_t - \theta_t dt$, we can write 
\begin{equation*}
	\begin{cases}
		d\left(\frac{1}{\gamma_t}\right) = \frac{1}{\gamma_t} \left[(\eta_t -\theta_t \beta_t)dt + \beta_t dW_t\right] \\
		\gamma_0 > 0
	\end{cases}  
\end{equation*}
and consequently we have that $d\gamma_t = -\gamma_t\left[(\eta_t - \theta_t \beta_t - \beta_t^2 )dt + \beta_t dW_t\right]$. Let the portfolio process $V^\alpha_t$ be as in Eq.  \eqref{discounted:port} and observe that
$$
d\left(-\gamma_t V_t^\alpha\right) = \gamma_t V_t^\alpha \left\{(\theta_t \sigma_t \alpha_t + \eta_t -\theta_t \beta_t - \beta_t^2 +\beta_t \sigma_t \alpha_t )dt - (\sigma_t \alpha_t - \beta_t )dW_t\right].
$$
Applying Ito's formula to $\exp(-\gamma_t V_t^\alpha)$ then yields
$$
d\left(e^{-\gamma_t V_t^\alpha}\right) = e^{-\gamma_t V_t^\alpha}\gamma_t V_t^\alpha\left\{\left[\left(\sigma_t \alpha_t -\beta_t\right)\left(\theta_t + \beta_t + \frac12 \gamma_t V_t^\alpha (\sigma_t \alpha_t - \beta_t)\right) + \eta_t \right]dt - (\sigma_t \alpha_t - \beta_t )dW_t\right\}
$$
Collecting the results above, the stochastic differential of $\mbu_t(V^\alpha_t)= -\frac{1}{\gamma_t} e^{-\gamma_t V^\alpha_t}$ reads
\begin{align}\label{eq:fwd_p_dynamics}
	d\left(\mbu_t(V^\alpha_t)\right) = -\frac{1}{\gamma_t}e^{-\gamma_t V^\alpha_t} \bigg\{\left[\gamma_t V^\alpha_t (\sigma_t \alpha_t - \beta_t)\left(\theta_t + \frac12 \gamma_t V^\alpha_t(\sigma_t \alpha_t - \beta_t)\right) + \eta_t(\gamma_t V^\alpha_t + 1) -\theta_t \beta_t\right]dt + \nonumber \\
	+ \left[\beta_t -\gamma_tV^\alpha_t \left(\sigma_t \alpha_t - \beta_t\right)\right]dW_t \bigg\}. 
\end{align}
To show that $\mbu_t(\om, x)$ is a forward performance, we need to impose conditions on $(\eta_t,\beta_t)$ to ensure that $\mbu_t(\om, V_t^{\alpha})$ is a supermartingale for all $\alpha \in \Acal$. Considering that $-\frac{1}{\gamma_t}e^{-\gamma_t V^\alpha_t} < 0$ for any $\alpha$, the latter condition is equivalent to requiring 
\begin{equation}\label{eq:downward_parabola}
	\alpha_t^2\left(\frac12 (\gamma_t V^\alpha_t)^2 \sigma_t^2\right) + \alpha_t \left(\gamma_t V^\alpha_t \theta_t \sigma_t - (\gamma_t V^\alpha_t)^2 \sigma_t \beta_t \right) + \frac12 (\gamma_t V^\alpha_t)^2 \beta_t^2 - \gamma_t V^\alpha_t\theta_t \beta_t + \eta_t(\gamma_t V^\alpha_t + 1) - \theta_t \beta_t \geq 0.
\end{equation}
We consider the parabola $x\mapsto f(x,y,\eta_t)$ defined as 
\[f(x,y,\eta_t)=x^2\left(\frac12 (\gamma_t y)^2 \sigma^2_t\right) + x \left(\gamma_t y \theta_t \sigma_t - (\gamma_t y)^2 \sigma_t \beta_t \right) + \frac12 (\gamma_t y)^2 \beta_t^2 - \gamma_t y\theta_t \beta_t + \eta_t(\gamma_t y + 1) - \theta_t \beta_t.\]
If we compute the $x$-coordinate of the vertex of the parabola we have that $\alpha_t^y=\frac{(\gamma_t y \beta_t - \theta_t)}{\gamma_t y\sigma_t}$. Imposing that the vertex lies on the $x$-axis leads to the condition 
\begin{equation*}
	\eta_t(y) = \frac{\theta_t(\theta_t + 2\beta_t)}{2(\gamma_t y + 1)}. 
\end{equation*}
Indeed for any fixed $y\in \R$ we have $f(x,y,\eta_t(y))\geq 0$ for all $x\in\R$ and $f(\alpha_t^y,y,\eta_t(y))=0$, which in particular implies $f(\alpha_t^*, V_t^*, \eta^*_t)=0$ for  
\[\alpha_t^*=\frac{(\gamma_t V_t^{*} \beta_t-\theta_t)}{\gamma_t V_t^*\sigma_t} \quad \eta^*_t = \frac{\theta_t(\theta_t + 2\beta_t)}{2(\gamma_t V_t^* + 1)}\]
and $V_t^*$ satisfying
\[dV_t^*=\frac{(\gamma_t V_t^{*} \beta_t-\theta_t)}{\gamma_t\sigma_t}\left((\mu_t-r)dt+\sigma_t dW_t\right).\]
Moreover, for every other strategy $\alpha \in \Acal$ we have the inequality $f(\alpha_t, V_t^*,\eta_t^*)\geq 0$. By setting $\mbu_t(\om, x) = -\frac{1}{\gamma_t(\om)}e^{-\gamma_t(\om) x}$ with $\beta_t$ arbitrary and $\eta_t=\eta_t^*$ defined as above,
then necessarily $\mbu_t(\om, V_t^{\alpha^*})$ is a $\PW$-martingale and for any other strategy $(\alpha_t)$ we have $\mbu_t(\om, V_t^{\alpha})$ is a $\PW$-supermartingale. In fact once chosen $\eta_t$ as $\eta_t^*$ the quantity in \eqref{eq:downward_parabola} is necessarily positive for every $\alpha$ and therefore the drift in \eqref{eq:fwd_p_dynamics} is negative, and annihilates only for the choice $\alpha=\alpha^*$. 

\smallskip
In order to show that the choice $\eta_t = 0$, $\beta_t = -\frac{\theta_t}{2}$ results in $\mbu_t(\om, x)$ being a forward performance, it suffices to plug $\beta_t = -\frac{\theta_t}{2}$ into $\frac{\theta_t(\theta_t + 2\beta_t)}{2(\gamma_t V_t^* + 1)}$ and observe that the couple satisfies the supermartingality condition. Moreover, setting the left-hand side of \eqref{eq:downward_parabola} equal to 0, substituting $\eta_t = 0$, $\beta_t = -\frac{\theta_t}{2}$ and solving for $\alpha_t$ yields the optimal strategy 
$$
\alpha_t^\ast = -\frac{1}{\gamma_t \sigma_t V^\ast_t} \left(\theta_t + \frac{\theta_t}{2} \gamma_t V^\ast_t \right).
$$
\hfill \qed

\subsection{Proof of Proposition \ref{prop:tutti_beta}}

If $\theta_t = g(t)$ for some function $g:[0,\infty) \to \R$, imposing $\eta_t = 0$ and $\beta_t = f(t)$ for some deterministic function $f:[0,\infty) \to \R$ then applying Lemma \ref{lemma:alfa0} we have that Problem \ref{optimization} is consistent. 


Viceversa, by repeating the same argument as that of Theorem \ref{main:thm}, we can obtain the relation in Eq. \eqref{relazione:importante}, which we report below for simplicity
\begin{equation}\label{eq:a21_nuova}
	\frac{k_t-k_s}{\gamma_s}=\Eq{\int_s^t \frac{1}{\gamma_u}\left(\frac{\theta_u^2}{2}+\eta_u\ln (Z_u)+\theta_u\beta_u-k_t\eta_u\right)du \middle| \Fcal_s}.
\end{equation}
An identical argument to that in the proof of Theorem \ref{main:thm} shows that Eq. \eqref{eq:a21_nuova} implies $\eta_t = 0$ for all $t\geq0$. Subsequently, by Lemma \ref{lemma:alfa0} together with the assumption that $\theta_t$ is deterministic we conclude that $\beta_t = f(t)$ for some deterministic function $f:[0,\infty) \to \R$. \hfill \qed

\subsection{Proof of Proposition \ref{prop:ut_w_noise}}
First, we observe that the process $X_t$, defined as a solution to $dX_t = X_t \beta_t dW_t$, $X_0=1$, defines a new probability measure $\PW^\ast$ such that $X_t = \frac{d\PW^\ast}{d\PW}$. Consequently, fixing some investment horizon $t >0$ and considering $\mbu_t(\om, v) = u(v) \cdot X_t(\om)$ we have that
\begin{equation}\label{eq:ch_of_measure}
	\sup_{\substack{\alpha \in \Acal \\ V^\alpha_0 = x}} \Ep{\mbu_t(V_t^\alpha)} = \sup_{\substack{\alpha \in \Acal \\ V^\alpha_0 = x}} \mathbb{E}^\ast\left[u(V_t^\alpha)\right],
\end{equation}
where the expectation on the right-hand side is taken with respect to the new probability measure $\PW^\ast$. Similarly to the previous proofs, the problem above can be dealt with from the perspective of an infinite dimensional maximization problem. We then rewrite \eqref{eq:ch_of_measure} as
\begin{equation*}
	\sup \left\{\mathbb{E}^\ast\left[u(\xi)\right]\mid \xi \in L^1(\Omega,\Fcal_t,\PWq) \text{ and } \Eq{\xi} = x\right\}.
\end{equation*}  
Observe that, whenever the function $u:\R \to \R \cup \{-\infty\}$ is of exponential form, as in Eq. \eqref{eq:exp_ut_deterministic}, the above problem is analogous to the one of Proposition \ref{prop:det_exp_ut}. In particular, defining the $\PW^\ast$-martingale $\varphi_t := \frac{Z_t}{X_t}$ which has dynamics $d\varphi_t = \varphi_t (\theta_t - \beta_t)dW^\ast_t$, we can directly apply the results of Proposition \ref{prop:det_exp_ut} to obtain that the optimization problem is consistent if and only if $\left(\theta_t - \beta_t\right)^2 = k(t)$, where $k:\R \to \R$ is some deterministic function. 

Lastly, assuming $u(v) = -\frac{1}{\gamma} e^{-\gamma v}$, we show that $\mbu_t(\om, v) = u(v) \cdot X_t(\om)$ with initial datum $u_0(v) = -\frac{1}{\gamma} e^{-\gamma v}$ is a forward performance if and only if $k(t) = 0$. 
Observe that, given the dynamics of $V_t^\alpha$ in $\eqref{discounted:port}$ we have that 
$$
d\left(u(V_t^\alpha)\right) = V^\alpha_t e^{-\gamma V_t^\alpha} \left[\left((\mu_t - r) \alpha_t - \frac12 \gamma V^\alpha_t \sigma^2_t \alpha_t^2\right)dt + \sigma_t \alpha_t dW_t\right]
$$
and hence the stochastic differential of $\mbu_t(V_t^\alpha)  = u(V_t^\alpha) \cdot X_t$ is given by
\begin{equation}\label{eq:parabola_multiplicative}
	d\mbu_t(V_t^\alpha) = X_t e^{-\gamma V_t^\alpha}\left[\left(-\frac12 \gamma (V_t^\alpha \sigma_t \alpha_t)^2 - V_t^\alpha (\theta_t - \beta_t)\right)dt + \left(V_t \sigma_t \alpha_t - \frac{1}{\gamma} \beta_t \right)dW_t \right].
\end{equation}
Notice that the drift coefficient of the latter describes a downward parabola. In particular, requiring that the drift be non-positive for all $\alpha \in \Acal$ is equivalent to imposing that the $y$-coordinate of the vertex of the parabola be 0. This latter condition then implies that $(\theta_t - \beta_t)^2 = k(t) = 0$. 

Finally, for this choice of $k(t)$ a simple inspection of \eqref{eq:parabola_multiplicative} shows that the unique strategy $\alpha^\ast \in \Acal$ such that $\left(\mbu_t(V_t^{\alpha^\ast})\right)_{t\geq 0}$ is a $\PW$-martingale is $\alpha^\ast_t = 0$.\hfill \qed


\section{Additional material}
\begin{lemma}\label{utile} Let $\mbu:\Omega\times \R\to \R\cup \{-\infty\}$, be $\Fcal\times \Borel_{\R}$-measurable, with $\mbu(\omega,\cdot)$ strictly increasing and strictly concave for any $\omega\in\Omega$. If for any $\omega\in\Omega$, $\mbu(\omega,\cdot)$ is differentiable and its derivative is invertible, then for $\PWq\sim \PW$ the problem  
\begin{equation*}
	\sup\left\{\int_{\Omega}\mbu(\omega,\xi(\omega))d\PW\mid \xi \in L^1(\Omega,\Fcal_t,\PWq) \text{ and } \Eq{\xi} = x_0 \right\}.
\end{equation*}  
has a unique solution $\xi^{\ast}(\omega)= F(\omega, \lambda^{\ast} Y(\omega))$ where $Y=\frac{d\PWq}{d\PW}$, $F(\omega,y)=(\mbu')^{-1}(\omega,y)$ and $\lambda^{\ast}$ solves the equation $\Eq{\xi^{\ast}} = x_0$. 
\end{lemma}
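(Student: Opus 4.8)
The plan is to treat this as a constrained concave maximization with a single linear budget constraint and to solve it by a Lagrangian/pointwise argument, then verify optimality through a supporting-line inequality. First I would rewrite the constraint in terms of $\PW$: since $Y = d\PWq/d\PW$, the budget reads $\Eq{\xi} = \Ep{Y\xi} = x_0$. Introducing a multiplier $\lambda \in \R$, I would consider the unconstrained functional $\Ep{\mbu(\omega,\xi(\omega)) - \lambda Y(\omega)\xi(\omega)}$ and observe that, because there is no coupling across states, it can be maximized pointwise in $\omega$: for each fixed $\omega$ one maximizes $\xi \mapsto \mbu(\omega,\xi) - \lambda Y(\omega)\xi$ over $\xi \in \R$.

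Next I would solve the pointwise problem. Since $\mbu(\omega,\cdot)$ is differentiable and strictly concave, the map $\xi \mapsto \mbu(\omega,\xi) - \lambda Y(\omega)\xi$ is strictly concave, so its unique stationary point is its unique global maximizer. The first-order condition $\mbu'(\omega,\xi) = \lambda Y(\omega)$ together with the invertibility of $\mbu'(\omega,\cdot)$ yields $\xi = F(\omega, \lambda Y(\omega))$ with $F = (\mbu')^{-1}$. Since $\mbu$ is increasing we have $\mbu' > 0$ and $Y > 0$ $\PW$-a.s. (as $\PWq \sim \PW$), forcing $\lambda > 0$ for the candidate to be admissible; I would also record that $F(\omega,\cdot)$ is decreasing, being the inverse of the decreasing $\mbu'(\omega,\cdot)$, so that $\lambda \mapsto \Ep{Y\, F(\omega,\lambda Y)}$ is monotone. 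This monotonicity is what lets me pin down $\lambda^\ast$ from the budget equation $\Ep{Y\, F(\omega,\lambda^\ast Y)} = x_0$.

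The verification step is the heart of the argument. Fixing $\lambda^\ast$ and writing $\xi^\ast(\omega) = F(\omega,\lambda^\ast Y(\omega))$, I would invoke the supporting-line inequality for the concave function $\mbu(\omega,\cdot)$ at the point $\xi^\ast(\omega)$,
\begin{equation*}
	\mbu(\omega,\xi(\omega)) \leq \mbu(\omega,\xi^\ast(\omega)) + \mbu'(\omega,\xi^\ast(\omega))\bigl(\xi(\omega) - \xi^\ast(\omega)\bigr) = \mbu(\omega,\xi^\ast(\omega)) + \lambda^\ast Y(\omega)\bigl(\xi(\omega) - \xi^\ast(\omega)\bigr),
\end{equation*}
valid for any competitor $\xi$. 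Taking $\Ep{\cdot}$ and using that both $\xi$ and $\xi^\ast$ lie in $L^1(\Omega,\Fcal_t,\PWq)$ and satisfy $\Eq{\cdot} = x_0$, the cross term collapses to $\lambda^\ast\bigl(\Eq{\xi} - \Eq{\xi^\ast}\bigr) = 0$, so $\Ep{\mbu(\xi)} \leq \Ep{\mbu(\xi^\ast)}$, which identifies $\xi^\ast$ as a maximizer. Strict concavity sharpens the pointwise bound to a strict inequality on $\{\xi \neq \xi^\ast\}$, so if $\PW(\xi \neq \xi^\ast) > 0$ the integrated inequality is strict, giving uniqueness.

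The main obstacle I anticipate is not the verification, which is clean, but the well-posedness bookkeeping surrounding it: guaranteeing existence of a multiplier $\lambda^\ast$ solving the budget equation (via continuity and the monotonicity noted above, together with a range argument ensuring $x_0$ is attained), checking that $\xi^\ast \in L^1(\Omega,\Fcal_t,\PWq)$ and that $\Ep{\mbu(\xi^\ast)}$ is well-defined, and justifying the splitting of the expectation of the supporting-line bound, for which the finiteness of $\Ep{Y\xi} = \Eq{\xi}$ and $\Ep{Y\xi^\ast} = \Eq{\xi^\ast}$ is exactly what is required. These are the points at which the standing assumptions on $\mbu$ and the equivalence $\PWq \sim \PW$ do the real work.
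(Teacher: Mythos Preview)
Your proposal is correct and follows essentially the same approach as the paper: a Lagrangian/change-of-measure setup, the pointwise first-order condition $\mbu'(\omega,\xi)=\lambda Y(\omega)$ giving the candidate $F(\omega,\lambda Y(\omega))$, and a concavity (supporting-line) inequality to verify optimality, with uniqueness from strict concavity. You are in fact more explicit than the paper about the sign of $\lambda$, the monotonicity used to pin down $\lambda^\ast$, and the integrability/well-posedness caveats, which the paper leaves implicit.
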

\begin{proof}
The Lagrangian function associated to the optimization problem reads
$$
\int_\Om \mbu(\xi) d\PW - \lambda\left(\int_\Om \xi d\PWq - x_0\right).
$$
and by a can change of probability measure we obtain 
$$
\int_\Om \left(\mbu(\xi) - \lambda\cdot Y(\xi - x_0)\right)d\PW.
$$
Indeed the pointwise first order condition with respect to $\xi$ becomes 
$$
\mbu'(\omega,\xi) - \lambda Y(\omega) = 0,
$$
so that the candidate for the optimum is $F(\omega, \lambda Y(\omega))$.
Simple inspections show that for any $\xi \in L^1(\PWq)$ with  $\Eq{\xi} = x_0$ we have  
$$\int_\Om \mbu(\xi)d\PW \leq \int_\Om \mbu(F(\cdot, \lambda Y)) d\PW - \lambda\left(\int_\Om F(\cdot, \lambda Y) d\PWq - x_0\right)$$
and therefore if $\lambda^{\ast}$ solves the equation $\Eq{F(\cdot, \lambda^{\ast} Y)} = x_0$ then $F(\omega, \lambda^{\ast} Y(\omega))$ is optimal. Uniqueness follows from strict monotonicity and concavity.
\end{proof}

\subsection{Proof of Remark \ref{power:utility}}\label{proof:power}
Fix $t>0$. Analogously to the previous proof we reformulate Problem \eqref{optimization} as its infinite dimensional counterpart as in Eq. \eqref{static}. Considering a power utility function, the optimal profile $\xi^\ast_t$ takes the form
\begin{equation*}
\xi^\ast_t = \frac{x \cdot Z_t^\beta}{\Eq{Z_t^\beta}},
\end{equation*}
where $\beta = -\frac{1}{1-\gamma}$ and $Z_t = \frac{d\PWq}{d\PW}$. Following a standard argument (see e.g. \cite{Bjork09}) let us define 
\begin{equation*}\label{eq:zeta_zero}
Z_t^0 := \exp\left\{-\frac12 \int_0^t \theta_u^2 (\beta +1)^2 du + \int_0^t \theta_u (\beta + 1) dW_u\right\}
\end{equation*}
and observe that $Z^{\beta+1}_t = Z_t^0 \exp\left\{\frac12\int_0^t \theta^2_u \beta^2 \gamma du \right\}$. Denote by $\QW_0$ the probability measure induced by $Z_t^0 = \frac{d\QW_0}{d\PW}$ and by 
\begin{equation*}\label{eq:H0}
H_t = \Eq{Z_t^\beta} = \Ezero{\exp\left\{\frac12\int_0^t \theta^2_u \beta^2 \gamma du \right\}}.
\end{equation*}
For some time $s < t$ let $\Pi_s$ denote the discounted no-arbitrage price of the optimum $\xi^\ast_t$. The problem is thus time-consistent whenever $\Pi_s = \xi_s^\ast$ that is, whenever
\begin{equation}\label{eq:consistency_cond_powut}
\frac{x}{H_t}\Eq{Z_t^{\beta}\middle| \Fcal_s}  = \frac{x \cdot Z_s^{\beta}}{H_s}.
\end{equation}
On the left hand side of \eqref{eq:consistency_cond_powut} we have that 
\begin{align*}
\Eq{Z_t^{\beta}\middle| \Fcal_s} &= \frac{\Ep{Z_t^{\beta+1}\middle| \Fcal_s}}{\Ep{Z_t\middle| \Fcal_s}}\nonumber = \frac{\Ep{Z_t^{\beta+1}\middle| \Fcal_s}}{Z_s},\nonumber 
\end{align*}
Moreover simple computations leads to
\begin{align*}
\Ep{Z_t^{\beta+1}\middle| \Fcal_s} 
&= \underbrace{Z_s^0 \exp \left\{\frac12\int_0^s\theta_u^2 \beta^2 \gamma du \right\}}_{Z_s^{\beta + 1}} \Ezero{\exp \left\{\frac12\int_s^t\theta_u^2 \beta^2 \gamma du \right\}\middle|\Fcal_s},
\end{align*}

and therefore \eqref{eq:consistency_cond_powut} becomes
\begin{equation}\label{eq:ht_hs}
\frac{H_t}{H_s} = \Ezero{\exp \left\{\frac12\int_s^t\theta_u^2 \beta^2 \gamma du \right\}\middle|\Fcal_s}.
\end{equation}
We define the following process
\begin{equation*}
M_t := \frac{\exp\left\{\frac12\int_0^t\theta_u^2 \beta^2 \gamma du\right\}}{\Ezero{\exp \left\{\frac12\int_0^t\theta_u^2 \beta^2 \gamma du \right\}}}
\end{equation*}
and observe that 
\begin{align}\nonumber
\Ezero{M_t\middle|\Fcal_s} &= \frac{\Ezero{\exp\left\{\frac12\int_0^t\theta_u^2 \beta^2 \gamma du\right\}\middle|\Fcal_s}}{\Ezero{\exp \left\{\frac12\int_0^t\theta_u^2 \beta^2 \gamma du \right\}}} \\
&= \frac{\Ezero{\exp\left\{\frac12\int_s^t\theta_u^2 \beta^2 \gamma du\right\}\middle|\Fcal_s}\exp\left\{\frac12\int_0^s\theta_u^2 \beta^2 \gamma du\right\}}{\Ezero{\exp \left\{\frac12\int_0^s\theta_u^2 \beta^2 \gamma du \right\}\Ezero{\exp \left\{\frac12\int_s^t\theta_u^2 \beta^2 \gamma du \right\}\middle|\Fcal_s}}}.\nonumber
\end{align}
Assuming Eq. \eqref{eq:ht_hs} holds we have that $\Ezero{\exp \left\{\frac12\int_s^t\theta_u^2 \beta^2 \gamma du \right\}\middle|\Fcal_s} \in \R$ and hence we can take it out of the expectation to obtain
\begin{equation*}
\Ezero{M_t\middle|\Fcal_s} = \frac{\exp\left\{\frac12\int_0^s\theta_u^2 \beta^2 \gamma du\right\}}{\Ezero{\exp \left\{\frac12\int_0^s\theta_u^2 \beta^2 \gamma du \right\}}} = M_s
\end{equation*}
and we conclude $M_t$ is a $\PWq_0$-martingale. A simple inspection show that $M_t$ is of finite variation and by \cite[][Proposition IV.1.2]{RY99} it follows that it must be constant. In particular then $M_t = M_0 = 1$ which entails  
\begin{equation*}\label{eq:exp_deterministic}
\exp\left\{\frac12\int_0^t\theta_u^2 \beta^2 \gamma du\right\} = \Ezero{\exp\left\{\frac12\int_0^t\theta_u^2 \beta^2 \gamma du\right\}},
\end{equation*}
for any $t \in [0,\infty)$. Observing that we can differentiate with respect of time $t$ under the expectation operator, taking the derivative on both sides we find that  
\begin{equation*}
\Ezero{\theta_t^2} = \theta_t^2
\end{equation*}
and since the above holds for any $t \in [0,\infty)$, we conclude that $t\mapsto \theta^2_t$ is a deterministic function. \hfill \qed

\end{document}